\newtheorem{example}[theorem]{Example}
\DeclareMathOperator{\T}{T}
\DeclareMathOperator{\cone}{cone}
\DeclareMathOperator{\I}{I}
\DeclareMathOperator{\trace}{trace}
\DeclareMathOperator{\conv}{conv}
\DeclareMathOperator{\closure}{clo}
\DeclareMathOperator{\expect}{E}
\DeclareMathOperator{\covariance}{Cov}
\DeclareMathOperator{\aff}{aff}
\newcommand{\cU}{\mathscr U}
\newcommand{\cX}{\mathscr X}
\newcommand{\Cov}{Q}
\newcommand{\comment}[1]{}
\newcommand{\N}{\mathbb{N}}
\newcommand{\R}{\mathbb{R}}
\newcommand{\Half}{\ensuremath{\frac{1}{2}}}
\newcommand{\Sym}[1]{\ensuremath{{\mathscr S}^{#1}}}
\begin{document}

\title{Relative Robust Portfolio Optimization}

\author{Raphael Hauser\footnotemark[1], 
Vijay Krishnamurthy\footnotemark[2]\; and 
Reha H.\ T{\"u}t{\"u}nc{\"u}\footnotemark[3]}

\date{\today}

\renewcommand{\thefootnote}{\fnsymbol{footnote}}
\footnotetext[1]{(Corresponding author) 
Mathematical Institute, University of Oxford, 24–-29 St Giles', Oxford OX1 3LB, 
United Kingdom. hauser@maths.ox.ac.uk. This author
was supported through grants GR/S34472 and EP/H02686X/1 from the
Engineering and Physical Sciences Research Council of the UK.}
\footnotetext[2]{LL Funds LLC, Philadelphia, PA, USA. vijay.krishnamurthy@llfunds.com}
\footnotetext[3]{Goldman Sachs Asset Management, New York, NY, 
USA. reha.tutuncu@gs.com.}
\renewcommand{\thefootnote}{\arabic{footnote}}

\maketitle

\begin{abstract}
Considering mean-variance portfolio problems with uncertain model parameters, we contrast the 
classical absolute robust optimization approach with the relative robust approach based on a maximum 
regret function. Although the latter problems are NP-hard in general, we show that tractable inner and 
outer approximations exist in several cases that are of central interest in asset management. 
\end{abstract}

\begin{AMS}
Primary 90C25, 91G10. Secondary 90C90, 90C47.
\end{AMS}

\begin{keywords} 
Robustness and sensitivity analysis, conic programming, robust deviation optimization, mean-variance optimization, tractable approximation of convex cones.
\end{keywords}

\section{Introduction} 

We consider decision-making tools for problems that are defined by
uncertain or unknown parameters. Uncertainty and the risk of 
undesirable outcomes are inevitable features of most 
productive activities. From engineers to economists, from health care 
providers to investment managers, many professionals must take
decisions under considerable uncertainty on a daily basis.
Our objective is to improve the quality of decisions made 
in these environments by understanding, modeling, quantifying 
and managing uncertainty. To achieve this goal we propose a robust optimization modeling 
methodology based on extending the idea of relative robust optimization of Kouvelis and Yu 
\cite{KY} in the context of discrete optimization. 

Many decision problems with uncertainty 
can be formulated as optimization problems.
In recent years, robust optimization (RO) has emerged as a 
powerful tool for managing uncertainty in such optimization problems
\cite{BTN1, BTN2}. An excellent overview can be gained from the recent 
survey paper \cite{bertsimas}. 
Robust optimization is a generic term that is 
used to describe a class of modeling strategies as well as 
solution methods for optimization problems that are
defined by uncertain inputs. Decisions made with 
incomplete information may result in undesirable outcomes when 
the realized values of the uncertain inputs are unfavorable.
Robust optimization models and algorithms aim to mitigate the effects 
of uncertainty and obtain a solution that is guaranteed to 
perform reasonably well for all, or at least most, 
possible realizations of the uncertain input parameters.

There is growing evidence, 
both empirical and theoretical, that robust-optimized 
solutions have better characteristics than 
their non-robust counterparts. For example, a theoretical study by
Sch\"ottle and Werner demonstrates that the map between the model parameters 
of an optimization problem and its set of optimal solutions can become much
smoother if one uses robust optimization with ellipsoidal uncertainty
sets \cite{Schottle}, \cite{Schottle2}. In a portfolio optimization setting with uncertain
expected return estimates, Ceria and Stubbs report simulated results
where the ex-post performance of robust-optimal portfolios outperform 
those of standard mean-variance optimal portfolios with high 
frequency \cite{Ceria_Stubbs}. 

While the stated goals of robust optimization are intuitive, it is
not always clear what metrics one should use to achieve these goals.
Most robust optimization approaches discussed in the existing literature 
use the ``worst-case objective value'' as the comparison metric among alternative sets of decisions.
Despite the advantages we mentioned in the previous paragraph, the focus on the worst-case 
objective value in robust optimization is a source of frequent criticism. Modelers, often with 
good reason, worry that the extreme scenarios in the uncertainty 
set have an undue influence on the final decisions in such robust
formulations. The ``worst-case objective value'' is an {\em absolute} metric. 
While there are many situations where it is the appropriate metric 
for evaluating robustness, it is inadequate for measuring robustness in a {\em relative} sense. 

A typical example arises in the investment management context 
where managers are frequently evaluated and compensated based 
on their performance relative to the competition. 
For robust decision-making in an uncertain decision environment, 
rather than protecting themselves against worst-case scenarios, 
investment managers may thus prefer to choose decisions that avoid 
falling severely behind their competitors under a range of scenarios. This view of robustness was 
formalized by Kouvelis and Yu \cite{KY}. For each choice of the decision variables and each scenario 
one compares the attained objective value with the optimal objective value attainable under 
the model parameter values described by the scenario. The 
difference between these two values, or alternatively their ratio,
can be seen as measures of {\em regret} based on hindsight after the true
values of the uncertain parameters are revealed. 
With the objective of limiting such regret measures
Kouvelis and Yu arrive at the {\em robust deviation} and 
{\em relative robust} decision problem formulations. 
We will provide these formulations in Section \ref{sec:ro}
as they form the focus of our study. We will refer to 
the robust deviation and relative robust decision problems collectively
as relative robust problems. 

While Kouvelis and Yu explore robust deviation and relative robust
decisions in several classes of discrete optimization problems,
similar studies for continuous optimization problems are 
mostly missing in the literature. A rare exception is 
Taguchi's master's thesis \cite{Taguchi} and the subsequent paper \cite{Taguchi2}. 
As these authors also observed,
relative robust formulations are typically more difficult than the
corresponding absolute robust formulations. 
Since they involve the optimal value function whose argument is the
vector of uncertain parameters inside a min-max optimization 
problem, relative robust problems are three-level optimization problems.
This is in contrast to the two-level absolute robust formulations. 
Since the optimal value function is rarely available in closed form, 
tractability is an important concern
for these models. Our study shows that for many uncertainty structures
on quadratic programming and other optimization problems, the resulting
relative robust formulation can be reduced to one or a series of 
single-level deterministic optimization problems that can be solved
using conic optimization methods. 

The simplest uncertainty sets are finite sets, corresponding to the
intuitive notion of a collection of scenarios. Both the absolute
and relative robust formulations with finite uncertainty sets are 
relatively easy as they can be solved as a finite sequence of deterministic
problems. Using simple convexity arguments, we show that robust
problems with polytopic uncertainty structures (uncertainty sets 
defined as convex hull of a finite number of points) can be reduced to the 
finite case and are therefore of the same complexity. Relative robust 
optimization problems with polytopic uncertainty sets were also considered 
by Taguchi et.\ al.\ \cite{Taguchi2}, but our discussion is based on the 
second author's MSc thesis \cite{krishnamurthy} which predated the 
work of the aforementioned authors and formed the basis of the original 
draft of this paper. Gregory et.\ al.\ \cite{gregory} also investigated the 
polyhedral case. 

In the second part of the paper we move beyond polytopic uncertainty sets 
and consider relative robust models with ellipsoidal uncertainty. In the application 
on which we concentrate, mean-variance portfolio optimization, we assume 
ellipsoidal uncertainty only in the vector of expected returns and assume that 
the covariance matrix is fixed. This assumption is justified, as practitioners 
typically use matrix shrinkage and factor models to ensure that the covariance 
is estimated robustly. Ellipsoidal uncertainty sets for the vector of expected 
returns appear quite naturally as confidence regions in their statistical estimation. 
Taguchi et.\ al.\ \cite{Taguchi2} propose to approximate ellipsoidal uncertainty 
sets by a polytope obtained as the convex hull of a random sample of points 
from the ellipsoid. The resulting relative robust problems are relaxations of the 
original relative robust problem with ellipsoidal uncertainty, and since these 
relaxations are based on polyhedral uncertainty, they can be solved via the 
approach discussed earlier. Adom \cite{Adom} investigated similar relaxations 
but based on polyhedra generated by pseudo-randomly chosen points, which 
leads to faster convergence. 

Our approach to relative robustness under ellipsoidal uncertainty is very different. 
By developing inner approximations to the relative robust problem in the form of 
a symmetric cone programming problem, we obtain solutions that are guaranteed 
to be feasible. In several cases of interest our inner approximations are 
provably tight, as we show using a theory developed by Sturm and Zhang \cite{sturm}. 
Thus, while our models approximate  the feasible set of the relative robust problem 
from the inside, Taguchi et.\ al.'s approach \cite{Taguchi2} approximates it from the 
outside, which yield solutions that are not guaranteed to be feasible but give a 
bound on the optimal objective value. The two approaches can be combined to 
obtain approximation guarantees for both -- a distance to optimality in the 
case of our model, and a distance to feasibility in the case of the model of 
Taguchi et.\ al..

Since we have in mind the mean-variance portfolio optimization framework of 
Markowitz \cite{Markowitz} as a particular application of the methods 
investigated in this paper, most of the notation we use will be inspired by 
this framework which is explained in some detail in Section \ref{application}. 
Much of the other notation is of standard use in the optimization literature, 
such as $\succeq$ and $\succ$ to denote positive semidefiniteness and 
definiteness of a matrix, for example, $\cone(\cdot)$, $\conv(\cdot)$ 
and $\aff(\cdot)$ for conic, convex and affine hulls of a subset of a 
vector space respectively, $\bullet$ for the trace inner-product of two 
matrices of equal size, $\cdot^*$ for duals of functionals and cones, 
$e$ and $\I$ for the vector of all ones and the identity matrix of appropriate 
size, and $\cdot^{\T}$ for the transpose of a matrix. 

\section{Absolute Robust versus Relative Robust Optimization} \label{sec:ro}

We consider a generic optimization problem
whose input parameters are denoted by the vector $p$.
\begin{align}
\max_{x\in\R^n}\,&f(x,p)\label{eq:op}\\
\text{s.t.}\quad&x\in\cX_p,\nonumber
\end{align}
where $f(x,p)$ and $\cX_p$ represent the objective function and 
the feasible set of the problem. For a given $p$, let $z^*(p)$ and 
$\Omega^*(p)$ denote, respectively, the optimal value and the
set of optimal solutions of the above given problem, provided that they exist. 
We will also use the notation $x^*(p)$ to denote a generic element
of $\Omega^*(p)$.

\subsection{The Absolute Robust Optimization Framework}

When $p$ is known \eqref{eq:op} is a standard optimization
problem. Robust optimization (RO) is concerned with the
case where $p$ is not known with certainty. In recent years, 
RO has emerged as an alternative to 
traditional approaches to optimization under uncertainty
such as sensitivity analysis and stochastic programming. 
As mentioned in the introduction, its primary objective is to find 
solutions that will have a good performance under a 
variety of scenarios for the uncertain input parameters.
RO models are especially well-suited in situations
where there are constraints with uncertain parameters that must
be satisfied regardless of the values of these parameters or 
when the optimal solutions are 
particularly sensitive to perturbations. Additionally, RO is an 
attractive modeling option when the decision-maker cannot afford 
low-probability high-magnitude risks. 

One of the essential elements of a RO model is the {\em uncertainty
set}. The uncertainty set, say $\cU$, represents the set
of possible scenarios/realizations for the parameters $p$.
When $p$ is uncertain and must be estimated, uncertainty sets can 
represent or be formed by difference of opinions, alternative estimates, 
confidence regions of statistical estimators, or based on Bayesian or Kalman 
filtering methods for tracking the evolution of an assumed probability 
distribution for $p$.  While the current literature does not 
provide clear guidelines on their construction, 
their shape often reflects the sources of uncertainty while
their size depends on the desired level of robustness. Common
types of uncertainty sets include: 
(i) $\cU = \{ p_1, p_2, \ldots, p_k \}$ (a finite set
of scenarios), (ii) $\cU = \mbox{conv} (p_1, p_2, \ldots, p_k )$ 
(a polytopic set), (iii) $\cU = \{p: l \leq p \leq u\}$ (intervals), 
and (iv) $\cU = \{p: p=p_0+Mu, \|u\| \leq 1\}$ (an ellipsoidal set).

RO formulations optimize some variation of a 
worst-case performance metric, where the ``worst-case'' is computed over the
uncertainty set. In most cases \cite{Ceria_Stubbs, 
ElGhaoui_Oks_Oustry, Goldfarb_Iyengar, Halldorsson_Tutuncu},
the objective is to optimize the worst-case realization of the objective
function. For the optimization problem \eqref{eq:op}, this leads
to the following formulation:
\begin{equation}\label{eq:absrob}
 \max_{x \in \bigcap_{p \in\cU}\cX_p}\left(\min_{p \in \cU}f(x,p)\right).
\end{equation}

Kouvelis and Yu \cite{KY} classify \eqref{eq:absrob} as 
the {\em absolute robust decision problem}. 
This name reflects the fact that the worst-case
objective value is an absolute metric. One potential consequence
of this emphasis on the worst-case is that the decisions are 
disproportionately affected by extreme scenarios in the uncertainty
set. As this is not always desirable, Bertsimas and Sim \cite{bertsimasSim} 
study this cost of robustness as a function of the level of conservatism. An alternative we 
consider in this paper is to seek robustness in a relative sense. 

\subsection{The Relative Robust Optimization Framework}

For this purpose, we consider a {\em regret function} that measures the
difference between the performance of the solution with and without the benefit of hindsight.
If we choose $x$ as decision vector when $p$ is the vector of realized 
parameter values, then the {\em regret} associated with having chosen $x$ rather than 
$x^*(p)$ as decision vector is defined as follows, 
\begin{eqnarray} \label{eq:regret}
r(x,p):= z^*(p) - f(x,p) = f(x^*(p),p)- f(x,p).
\end{eqnarray} 
Note that since $x^*(p)$ is an optimal decision vector for the parameter 
values $p$, the regret $r(x,p)$ is always nonnegative. 

The regret function is not useful at the decision-making stage since 
we cannot measure the regret before we observe the realized value of
the parameters. Furthermore, in many a context $p$ cannot be 
observed even {\em after} its realization. For example, financial data 
typically yield a single sample of a random return vector $R$ while the 
parameters $p=(\expect[R],\covariance(R))$ that serve as input 
parameters to the optimal investment problem are neither directly 
observable nor inferable from this single sample. 
For this reason we consider the maximum regret function instead, which
provides an upper bound on the true regret,
\begin{eqnarray} \label{eq:maxreg}
R(x) := \max_{p \in \cU} r(x,p) = \max_{p \in \cU}\left(z^*(p) - f(x,p)\right).
\end{eqnarray} 
If the function $z^*(p)$ is positive everywhere, one can also consider a 
scaled version of the regret function,
\begin{eqnarray} \label{eq:regret2}
 \tilde{r}(x,p)& =&\frac{z^*(p) - f(x,p)}{z^*(p)},\\
\tilde{R}(x) &= &\max_{p \in \cU}\tilde{r}(x,p)=
\max_{p \in \cU} \frac{z^*(p) - f(x,p)}{z^*(p)}.
\end{eqnarray} 

In Kouvelis and Yu \cite{KY}, vectors $x$ that minimize the maximum 
regret functions $R(x)$ and $\tilde{R}(x)$ are called {\em robust 
deviation decisions}
and {\em relative robust decisions} respectively. We 
collectively refer to problems seeking such decisions as relative robust problems
and focus on the function $R(x)$ for most of the rest of our discussion.

Let us consider the simpler case where the uncertain parameters are 
only in the objective function and the feasible set
$\cX_p \equiv \cX$ is independent of $p$. In most models,
the dependence of the objective function on the
uncertain parameters is linear. When this is the case, it is 
easy to see that the optimal value function $z^*(p)$ is a convex function. 
In fact, it is sufficient that $f$ be convex in $p$ to guarantee the  
convexity of $z^*(p)$:
\begin{lemma} \label{lem:simple}
Let $\cU$ be a convex set. 
For all $p \in \cU$, define
\begin{eqnarray*}
z^*(p)&= \sup_{x \in \cX}&f(x,p)
\end{eqnarray*}
where $f$ is convex in $p$. Then $z^*$ is a convex function
on $\cU$.
\end{lemma}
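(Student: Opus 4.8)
The plan is to prove convexity directly from the definition by showing that for any $p_1, p_2 \in \cU$ and $\lambda \in [0,1]$, the value at the convex combination is bounded above by the convex combination of the values. Set $p_\lambda = \lambda p_1 + (1-\lambda) p_2$, which lies in $\cU$ by convexity of the uncertainty set, so $z^*(p_\lambda)$ is well-defined. The key observation is that for every fixed $x \in \cX$, the function $p \mapsto f(x,p)$ is convex, hence
\begin{equation*}
f(x, p_\lambda) \le \lambda f(x, p_1) + (1-\lambda) f(x, p_2) \le \lambda z^*(p_1) + (1-\lambda) z^*(p_2),
\end{equation*}
where the second inequality uses that $f(x,p_i) \le \sup_{x' \in \cX} f(x',p_i) = z^*(p_i)$ for $i = 1, 2$.

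The right-hand side of this chain of inequalities does not depend on $x$, so I can take the supremum over $x \in \cX$ on the left-hand side, which yields
\begin{equation*}
z^*(p_\lambda) = \sup_{x \in \cX} f(x, p_\lambda) \le \lambda z^*(p_1) + (1-\lambda) z^*(p_2).
\end{equation*}
This is exactly the convexity inequality for $z^*$ on $\cU$, so the proof is complete.

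There is essentially no hard part here: the argument is the standard fact that a pointwise supremum of convex functions is convex, specialized to the family $\{f(\cdot, \cdot)\}$ indexed by $x \in \cX$. The only points worth a word of care are (i) noting that $p_\lambda \in \cU$ so that the quantities involved are all defined, which is why the hypothesis that $\cU$ is convex is needed, and (ii) being slightly careful that we are taking a supremum rather than assuming the maximum is attained — but since the inequality we need is an upper bound on $z^*(p_\lambda)$, passing to the supremum on the left causes no difficulty, and we never need $z^*(p_i)$ to be attained either. If one wanted to allow $z^*(p)$ to take the value $+\infty$, the inequalities still make sense in the extended reals and the conclusion holds verbatim; otherwise one simply works on the region where $z^*$ is finite.
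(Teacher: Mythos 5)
Your proof is correct and follows essentially the same route as the paper's: both rest on the inequality $f(x,p_\lambda)\leq\lambda f(x,p_1)+(1-\lambda)f(x,p_2)\leq\lambda z^*(p_1)+(1-\lambda)z^*(p_2)$ for each fixed $x\in\cX$, followed by passing to the supremum over $x$. Your version is slightly tidier in that taking the supremum directly (with the conventions of the extended reals) absorbs the $\varepsilon$-approximation and the separate $z^*(p_\lambda)=+\infty$ case that the paper's Appendix A treats explicitly.
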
 

Lemma \ref{lem:simple} is part of the folklore on convex analysis. For 
the sake of completeness we include a proof in Appendix A. As we will see in the next section, 
this simple convexity result is responsible for reducing relative robust optimization models with 
polytopic uncertainty sets to problems with finite uncertainty sets.

\section{Application to Mean-Variance Portfolio Optimization}\label{application}

Portfolio theory deals with the problem of deciding what proportion of 
investable wealth to allocate to each of several risky investment opportunities 
so as to achieve a chosen goal, which is usually to maximize the expected return 
while limiting risk. Under the mean-variance optimization (MVO) approach of 
Markowitz \cite{Markowitz}, all $n$ investments are assumed to be held during 
the same fixed investment period over which they generate random returns 
$R_i$. Assembled in a random vector $R$, the expectation $\mu=\expect[R]$ 
and the positive definite covariance matrix $\Cov=\covariance(R)$ of the asset 
returns serve as input parameters to 
one of the quadratic programming problems \eqref{eq:minvar}--\eqref{eq:maxrar} 
described below. Their solutions yield optimal portfolio weights. 
Though these problems are convex and computationally 
tractable and can therefore be solved to global optimality, MVO models can produce 
portfolios that are highly sensitive to the values of the model parameters 
$(\mu,\Cov )$ and show unsatisfactory diversification. Since $(\mu,\Cov )$ have 
to be estimated statistically, output sensitivity to these parameters is an 
important practical issue. Robust optimization models have therefore emerged 
as favourable alternatives to plain MVO models \cite{Goldfarb_Iyengar, 
Koenig_Tutuncu, Ceria_Stubbs}. 

\subsection{Classical Mean-Variance Portfolio Models}

Although the mathematical methods investigated in this paper are applicable more widely, the 
MVO framework constitutes their main motivation and application. 
We shall therefore briefly describe some of the models that arise in this context. 
Let $x_i$ be the proportion of wealth invested in the $i$-th investment 
opportunity (or asset), and let these weights be collected in a vector $x$ 
of size $n$. The portfolio corresponding to the weights $x$ then has the 
overall return $R^{\T}x$ with expectation $\mu^{\T}x$ and variance 
$x^{\T}\Cov x$. Apart from the {\em budget constraint} $e^{\T}x=1$ 
(where $e:=[\begin{smallmatrix}1&\dots&1\end{smallmatrix}]^{\T}$), 
fund managers usually restrict the set $\cX$ of {\em feasible} portfolios 
(investment decisions they are willing to consider) by introducing further 
constraints that impose limits on short-selling, diversification, rebalancing 
costs and other criteria. Typically, all constraints are linear, leading to a 
polyhedral feasible set $\cX=\{x\in\R^n:\,Fx=f,\,Gx\leq g\}$. Here we make 
the minimal assumption that $\cX$ be a convex tractable set, that is, a set 
for which it can be decided in polynomial time whether or not a given point 
is a member. 

\subsubsection{Convex MVO Models}\label{subsubsection:variants}

Taking the variance of the portfolio return as a risk measure, MVO 
formulations are obtained by either choosing to minimize the variance 
subject to a lower bound target return $\rho$, to maximize the return 
subject to an upper bound target risk $\sigma^2$ or to maximize the 
risk-adjusted expected return $\mu^{\T}x-\lambda x^{\T}\Cov x$ defined by a 
specific choice of a risk-aversion parameter $\lambda>0$,
\begin{align}
\min_{x\in\R^n}\,&f_{\Cov}(x):=x^{\T}\Cov x \label{eq:minvar}\\
\text{s.t.}\quad&\mu^{\T}x\geq \rho,\nonumber\\
&x\in\cX,\nonumber
\end{align}
\begin{align}
\max_{x\in\R^n}\,&f_{\mu}(x):=\mu^{\T}x\label{eq:maxret}\\
\text{s.t.}\quad&x^{\T}\Cov x\leq\sigma^2,\nonumber\\
&x\in\cX,\nonumber
\end{align}
\begin{align} 
\max_{x\in\R^n}\,&f_{\mu,\Cov}(x):=\mu^{\T}x-\lambda x^{\T}\Cov x
\label{eq:maxrar}\\
\text{s.t.}\quad&x\in\cX.\nonumber
\end{align}
It is well-known and can easily be established using KKT conditions that the 
three formulations presented above are equivalent in the sense that they 
produce identical solutions for 
appropriately chosen values of $\rho$, $\sigma^2$ and 
$\lambda$. For example, there exists a function 
$\sigma^2(\mu,\Cov ,r)$ such that the solutions of \eqref{eq:minvar} and
\eqref{eq:maxret} coincide when $\sigma^2=\sigma^2(\mu,\Cov ,\rho)$. 
We note however that this functional dependence also depends on the 
model parameters. A portfolio $x$ is said to be {\em efficient} if it optimizes 
\eqref{eq:minvar}--\eqref{eq:maxrar} for some choice of $\rho$, 
$\sigma^2$ and $\lambda$ respectively. Because of the 
above-mentioned equivalence, it does not matter which problem we use for 
this definition. The set $\{((x^{\T}\Cov x)^{1/2},\mu^{\T}x):\,x\text{ efficient}\}$ 
is the efficient frontier.

\subsubsection{Sharpe Ratio Maximization}\label{subsubsection:Sharpe}

Another variant of MVO is the Sharpe Ratio maximization problem. 
Let $r $ be the return of a risk-free investment held over the same period 
as the above considered assets, e.g., a short-term government bond or the 
money market. If this risk-free asset is included among the considered assets 
-- we call it asset $0$ and refer to it as cash -- and if the feasible set $\cX_0$ 
of portfolios containing a cash position is of affine form, 
\begin{equation}\label{affine form}
\cX_0=\aff\left(\left\{\bigl[\begin{smallmatrix}0\\x\end{smallmatrix}\bigr]:\,
x\in\cX\right\}\cup\left\{\bigl[\begin{smallmatrix}1\\0\end{smallmatrix}\bigr]\right\}\right),
\end{equation}
where $\aff(\cdot)$ denotes the affine hull of a set and $\cX$ is the set of 
feasible portfolios containing only positions in the risky assets, then cash 
can freely be borrowed to invest in the risky assets. In this case the 
efficient frontier is a straight line going through the point $(0,r )$ and 
with a gradient given by 
\begin{align}
\max_{x\in\R^n}\,&\frac{\mu^{\T}x - r }{\sqrt{x^{\T} \Cov x}}\label{eq:Sharpe} \\
\text{s.t.}\quad&x\in\cX.\nonumber
\end{align}
The {\em Sharpe ratio} \cite{Sharpe} of a portfolio $x$ is defined as the 
ratio of the excess expected return of the portfolio over the risk-free asset 
and the standard deviation of the portfolio return. Correspondingly, Model 
\eqref{eq:Sharpe} is called the 
{\em Maximum Sharpe Ratio problem} (MSR). In the case where this problem has a 
unique optimal solution, this solution is called the {\em market portfolio}. It can be easily 
seen that any efficient portfolio is then an affine combination of cash and the market portfolio. 
This observation forms the basis of Sharpe's capital asset pricing theory \cite{capm}. 

While \eqref{eq:Sharpe} is a nonlinear and nonconvex problem, it can be solved by the 
tractable convex programming problem 
\begin{align}
\max_{y\in\R^{n}}\,&g(y)=(\mu-r e)^{\T}y\label{convexification}\\
\text{s.t.}\quad&y\in\R_+\cX,\nonumber\\
&y^{\T}\Cov y\leq 1.\nonumber
\end{align}
See Appendix B for detailed explanations, and also 
\cite{Goldfarb_Iyengar} and \cite{Koenig_Tutuncu} for similar techniques. 

\subsection{Absolute Robust Portfolio Models}\label{subsec:arpm}

Motivated by the
sensitivity of the solutions of Problems \eqref{eq:minvar}, \eqref{eq:maxret}, 
\eqref{eq:maxrar} and \eqref{eq:Sharpe} as functions of the model parameters 
$(\mu,\Cov)$, we next consider 
robust counterparts of these models. Depending on how the uncertainty 
set $\cU$ for the model parameters $(\mu,\Cov)$ is chosen and which of 
the models one chooses to robustify, one 
arrives at different robust formulations. We note that the ensuing 
models are no longer all equivalent in the sense in which the nonrobust 
versions \eqref{eq:minvar}--\eqref{eq:maxrar} were. The main reason 
for this difference is that the feasible 
sets of Problems \eqref{eq:minvar} and \eqref{eq:maxret} depend on 
the model parameters $(\mu,\Cov)$ while those of Problems \eqref{eq:maxrar} 
and \eqref{eq:Sharpe} do not, with the consequence that in the context of 
the former two problems joint uncertainty structures in $\mu$ and $\Cov$ 
cannot be exploited, while in the context of the latter two they can, at least conceptually. 

To be more specific, let us write $\cU_{\mu}:=\left\{\mu:\,\exists\,(\mu,\Cov)\in\cU
\right\}$ for the projection of $\cU$ onto the $\mu$-component, and 
$\cU_{\Cov}:=\left\{\Cov:\,\exists\,(\mu,\Cov)\in\cU\right\}$ 
for its projection onto the $\Cov$-component, and note that when 
$\mu$ and $\Cov$ have joint uncertainty structure, then $\cU$ is usually 
strictly contained in the Cartesian product $\cU_{\mu}\times\cU_{\Cov}$. 
Yet the absolute robust counterpart of \eqref{eq:minvar} 
has the following equivalent formulations, 
\begin{align}
\min_{\{x\in\cX:\,\mu^{\T}x\geq \rho\,\forall\,(\mu,\Cov)\in\cU\}}
\,&\left(\max_{(\mu,\Cov)\in\cU}x^{\T}\Cov x\right)\nonumber\\
\Leftrightarrow 
\min_{\{x\in\cX:\,\min_{\mu\in\cU_{\mu}}\mu^{\T}x\geq \rho\}}
\,&\left(\max_{\Cov\in\cU_{\Cov}}x^{\T}\Cov x\right)\label{eq:ARminvar},\\
\Leftrightarrow 
\min_{\{x\in\cX:\,\mu^{\T}x\geq \rho\,\forall\,(\mu,\Cov)\in
\cU_{\mu}\times\cU_{\Cov}\}}
\,&\left(\max_{(\mu,\Cov)\in\cU_{\mu}\times\cU_{\Cov}}x^{\T}\Cov x\right).\nonumber
\end{align}
Thus, the joint uncertainty structure of $\mu$ and $\Cov$ cannot be exploited 
in the context of the robust problem \eqref{eq:ARminvar}, and neither can it 
be in the context of the absolute robust counterpart of \eqref{eq:maxret}, 
\begin{equation}
\max_{\{x\in\cX:\,\max_{\Cov\in\cU_{\Cov}}x^{\T}\Cov x\leq\sigma^2\}}
\,\left(\min_{\mu\in\cU_{\mu}}\mu^{\T}x\right).\label{eq:ARmaxret}
\end{equation}
In contrast, joint uncertainty in $\mu$ and $\Cov$ can be exploited, at least 
conceptually, in the framework of the absolute robust counterpart of 
\eqref{eq:maxrar}, 
\begin{equation}\label{eq:ARmaxrar}
\max_{x\in\cX}\left(\min_{(\mu,\Cov)\in\cU}\mu^{\T}x-\lambda x^{\T}\Cov x\right),
\end{equation}
as well as in the absolute robust counterpart of \eqref{eq:Sharpe}, 
\begin{equation}\label{eq:ARSharpe}
\max_{x\in\cX}\left(\min_{(\mu,\Cov)\in\cU}
\frac{\mu^{\T}x-r }{\sqrt{x^{\T}\Cov x}}\right).
\end{equation}

Special cases of the above-described models appear in the literature as 
follows: Goldfarb and Iyengar \cite{Goldfarb_Iyengar} discussed the problems 
\eqref{eq:ARminvar}, \eqref{eq:ARmaxret} in the 
case where $\cU$ is an uncertainty set of Cartesian type 
$\cU=\cU_{\mu}\times\cU_{\Cov}$ corresponding to a confidence region 
for the statistical estimators arising in the context of the load-factor model 
for $(\mu,\Cov)$. Halld\'{o}rsson and T\"{u}t\"{u}nc\"{u} \cite{Halldorsson_Tutuncu}
discussed the problem \eqref{eq:ARmaxrar} in the case where 
$\cU=\cU_{\mu}\times\cU_{\Cov}$, and where $\cU_{\mu}$ is a box of confidence 
intervals for the individual components of $\mu$ and $\cU_{\Cov}$ is a box of 
confidence intervals intersected with the cone of positive semidefinite symmetric 
matrices. 

Note that the absolute robust problems \eqref{eq:minvar}--\eqref{eq:maxrar} 
are all two-level optimization problems and thus a priori harder to solve than the classical 
nonrobust models \eqref{eq:minvar}--\eqref{eq:maxrar}. However, when 
$\cU$ is tractable, then the robust problems are tractable too. See the 
above cited papers and the other literature on robust optimization for details. 

\subsection{Relative Robust Portfolio Models}

To contrast the relative robust framework with the absolute robust setting, 
we next formulate relative robust counterparts of problems 
\eqref{eq:minvar}--\eqref{eq:maxrar}. 

Let us first consider problem \eqref{eq:minvar}, which has two meaningful 
relative robust analogues: In the first version, 
\begin{equation*}
z^*(\mu,\Cov ):=\min_{\{y\in\cX:\,\mu^{\T}y\geq \rho\}}\,y^{\T}\Cov y,
\end{equation*}
is defined as the maximum objective value achievable by an 
{\em omniscient} 
adversary (one who knows the parameters $(\mu,\Cov )$ with certitude), 
leading to the maximum regret 
\begin{equation*}
R(x):=\max_{(\mu,\Cov)\in\cU}(x^{\T}\Cov x\quad-\min_{\{y\in\cX:\,
\mu^{\T}y\geq \rho\}}y^{\T}\Cov y)
\end{equation*}
and the relative robust problem 
\begin{equation}\label{relrob:minvar version1}
\min_{\{x\in\cX:\,\min_{\mu\in\cU_{\mu}}\mu^{\T}x\geq \rho\}}
\bigl(\max_{(\mu,\Cov)\in\cU}(x^{\T}\Cov x\quad-\min_{\{y\in\cX:\,
\mu^{\T}y\geq \rho\}}y^{\T}\Cov y)\bigr).
\end{equation}
In the second version regrets are computed merely relative to the solution of 
a {\em fortuitous} adversary who is bound to choosing a portfolio that is feasible 
for all parameter values in $\cU$ but happens to choose the one that is 
optimal among these for the true parameter values. Thus, one would have to define 
\begin{eqnarray*}
z^*(\mu,\Cov):=\min_{\{y\in\cX:\,\min_{\nu\in\cU_{\mu}}
\nu^{\T}y\geq \rho\}}\,y^{\T}\Cov y,\\
R(x):=\max_{\Cov\in\cU_{\Cov}}(x^{\T}\Cov x\quad-\min_{\{y\in\cX:\,
\min_{\nu\in\cU_{\mu}}\nu^{\T}y\geq \rho\}}y^{\T}\Cov y),
\end{eqnarray*}
which leads to the relative robust problem 
\begin{equation}\label{relrob:minvar version2}
\hspace{-0.1cm}
\min_{\{x\in\cX:\,\min_{\mu\in\cU_{\mu}}\mu^{\T}x\geq \rho\}}\bigl(
\max_{\Cov\in\cU_{\Cov}}(x^{\T}\Cov x\quad-\min_{\{y\in\cX:\,
\min_{\nu\in\cU_{\mu}}\nu^{\T}y\geq \rho\}}y^{\T}\Cov y)
\bigr).
\end{equation}
Note that, similarly to what we observed in the context of Problem 
\eqref{eq:ARminvar}, the dependence of the feasible set of Problem \eqref{eq:minvar} 
on the model parameter $\mu$ introduces limitations on the exploitation of 
joint uncertainty in $\mu$ and $\Cov$. The emergence of two conceptually different relative 
robust analogues is also due to this dependence. 

In complete similarity, Problem \eqref{eq:maxret} has the following two relative 
robust analogues with similar limitations on the exploitation of structured 
uncertainty sets, 
\begin{align}
\min_{\{x\in\cX:\,\max_{C\in\cU_{\Cov}}x^{\T}C x\leq\sigma^2\}}&\bigl(
\max_{(\mu,\Cov)\in\cU}(\max_{\{y\in\cX:\,
y^{\T}\Cov y\geq\sigma^2\}}\mu^{\T}(y-x))\bigr),\label{relrob:maxret version1}\\
\min_{\{x\in\cX:\,\max_{C\in\cU_{\Cov}}x^{\T}C x\leq\sigma^2\}}&
\bigl(\max_{\mu\in\cU_{\mu}}(\max_{\{y\in\cX:\,\max_{\Cov\in\cU_{\Cov}}
y^{\T}\Cov y\geq\sigma^2\}}\mu^{\T}(y-x))\bigr).
\label{relrob:maxret version2}
\end{align}

Let us now turn our attention to the relative robust counterparts of \eqref{eq:maxrar} 
and \eqref{eq:Sharpe}. In these cases, the feasible set is independent of the model parameters, and 
for any given uncertainty structure there exists only one relative robust counterpart model. 

\subsubsection{The Relative Robust Counterpart of Problem (\ref{eq:maxrar})}

We define 
\begin{align*}
z^*(\mu,\Cov)&:=\max_{y\in\cX}(\mu^{\T}y-\lambda y^{\T}\Cov y),\\
R(x)&:=\max_{(\mu,\Cov)\in\cU}(
\max_{y\in\cX}(\mu^{\T}y-\lambda y^{\T}\Cov y)
\;-\;(\mu^{\T}x-x^{\T}\Cov x)).
\end{align*}
Problem \eqref{eq:maxrar} then has the following relative robust counterpart,  
\begin{equation}\label{eq:rrp1}
\min_{x\in\cX}\bigl(\max_{(\mu,\Cov)\in\cU}(
\max_{y\in\cX}(\mu^{\T}y-\lambda y^{\T}\Cov y)
\;-\;(\mu^{\T}x-x^{\T}\Cov x))\bigr).
\end{equation}
Introducing an artificial variable $\gamma$ that expresses an upper 
bound on the regret, we can equivalently reformulate this problem as follows, 
\begin{align}
\min_{(x,\gamma)\in\R^{n+1}}\,&\gamma\label{main form I}\\
\text{s.t.}\quad&x\in\cX,\nonumber\\
&\gamma\geq z^*(\mu,\Cov)-\mu^{\T}x+x^{\T}\Cov x,\quad\forall\,
(\mu,\Cov)\in\cU.\nonumber
\end{align}
Alternatively, using the definition of $z^*(\mu,Q)$, we obtain another equivalent 
formulation, 
\begin{align}
\min_{(x,y,\gamma)\in\R^{2n+1}}\,&\gamma\label{main form II}\\
\text{s.t.}\quad&x\in\cX,\nonumber\\
&\gamma\geq \mu^{\T}y-y^{\T}\Cov y-\mu^{\T}x+x^{\T}\Cov x,\quad\forall\,
(\mu,\Cov)\in\cU, y\in\cX.\nonumber
\end{align}
We will use the formulation \eqref{main form I} in Section \ref{sec:tope}, 
while the formulation \eqref{main form II} will be preferable in Section 
\ref{sec:ellipsoidal}. 

\subsubsection{The Relative Robust Counterpart of Problem (\ref{eq:Sharpe})}

We now describe the relative robust maximum Sharpe-ratio problem and reformulate it using 
the convexification approach described in Appendix B. 

We will use the notation introduced in 
Section \ref{subsubsection:Sharpe} and Appendix B, and we assume 
that the feasible set $\cX_0$ of portfolios containing a cash position takes the affine form 
\eqref{affine form}.

For any given $(\mu,\Cov)\in\cU$ the maximum Sharpe Ratio achievable in $\cX$ is given by 
\begin{equation}\label{eq:frac}
 z^*(\mu,\Cov):=\max_{x\in\cX}\frac{(\mu-r  e)^{\T}x}
{\sqrt{x^{\T}\Cov x}}
\end{equation}
and can be computed by solving a convex problem of the form \eqref{convexification}. 

Furthermore, introducing an artificial variable $\gamma$, the relative robust counterpart of Problem 
\eqref{eq:Sharpe} can be formulated as follows, 
\begin{align*}
\min_{\gamma\in\R, x\in\cX}\,&\gamma\\
\text{s.t. }&\frac{(\mu -r  e)^{\T}x}{\sqrt{x^{\T}\Cov x}}\geq z^*(\mu,\Cov) -\gamma,\quad
\forall\,(\mu,\Cov)\in\cU.
\end{align*}
We see that, in effect, this model correponds to comparing the Sharpe ratio achieved by the 
optimal decisions $x^*$ with the Sharpe Ratio achieved by an omniscient adversary. Although 
the absolute robust counterpart problem is tractable when the uncertainty set is of cartesian form 
$\cU=\cU_{\mu}\times\cU_{\Cov}$, the relative robust counterpart problem is not. We therefore 
restrict ourselves to the case where only $\mu$ is uncertain, and $\Cov$ is known with certainty, 
that is, $\cU=\cU_{\mu}\times\{\Cov\}$. This is a realistic assumption, as in practical applications it 
is typically more interesting to model uncertainty in $\mu$ only and guard against uncertainty in 
$\Cov$ via matrix shrinkage techniques. 

Assuming an uncertainty structure of the form $\cU=\cU_{\mu}\times\{\Cov\}$ and using the 
convexification approach described in Appendix B, the above relative robust model is equivalent to  
\begin{align}
\min_{\gamma\in\R, y\in\R^n}\,&\gamma\label{eq:RRSharpe}\\
\text{s.t. }&(\mu -r  e)^{\T}y\geq z^*(\mu) -\gamma,\quad\forall\,\mu\in\cU_{\mu},\nonumber\\
&y\in\R_+\cX,\nonumber\\
&y^{\T}\Cov y\leq 1.\nonumber
\end{align}
Note that since $\Cov$ is certain, $z^*$ can be considered to be a function of $\mu$ only. We also 
remark that if we had considered the relative robust counterpart of Problem \eqref{convexification}, 
we would have arrived at the same model under the chosen uncertainty structure.

\subsubsection{Complexity of Relative Robust Optimization}
All relative robust problems introduced above are three-level 
optimization problems. In contrast to the two-level absolute robust
models \eqref{eq:ARminvar}--\eqref{eq:ARmaxrar}, relative 
robust problems are generally intractable, even for tractable uncertainty 
sets $\cU$. This is further illustrated in Section \ref{sec:ellipsoidal} in the case 
where $\cU_{\mu}$ is an ellipsoid and $\cU_{\Cov}$ a singleton. The best 
we can hope to achieve in this case is to identify tractable approximations.  Most of 
Section \ref{sec:ellipsoidal} is therefore spent on deriving good polynomial-time 
solvable inner approximations to this problem. Outer approximations -- that is, relaxations -- 
that rely on the tractability results for polytopic uncertainty sets derived in Section \ref{sec:tope} 
were discussed by Taguchi et.\ al.\ \cite{Taguchi2} and Adom \cite{Adom}. 

\section{Finite and Polytopic Uncertainty Sets} \label{sec:tope}

In this section we will show that if the uncertainty set $\cU$ is chosen as a 
polytope -- that is, the convex hull of $k$ points -- or a set of $k$ points, then the 
relative robust optimization problems 
\eqref{relrob:minvar version1}, 
\eqref{relrob:minvar version2}, \eqref{relrob:maxret version1}, 
\eqref{relrob:maxret version2}, \eqref{eq:rrp1} and 
\eqref{eq:RRSharpe} are polynomial-time solvable as a function of $k$ and the problem 
dimension. The complexity is also polynomial in the logarithm of a condition number 
\cite{cuckerPena}, as any conic programming problem, but we will not discuss details here. 

\subsection{Solving Problem (\ref{eq:rrp1})}

We start by considering Problem \eqref{eq:rrp1} in the form \eqref{main form I} and by 
assuming that the uncertainty set 
\begin{equation*}
\cU=\{(\mu^{[i]},\Cov^{[i]}):\,(i=1,\dots,k)\}
\end{equation*} 
consists of finitely many scenarios. For any given $\mu$ and positive definite $\Cov$, $z^*(\mu, \Cov)$ is
easily computed by solving a convex quadratic optimization problem. Therefore, each instance of 
the last inequality in the formulation \eqref{main form I} is a convex quadratic 
constraint that can be efficiently handled using, for example, conic optimization methods. 
Model \eqref{main form I} can thus be rewritten by enumerating the possibilities, 
\begin{align}
\min_{x,\gamma}\;&\gamma\label{eq:rrp1finite}\\
&x\in\cX,\nonumber\\
&\mu^{[i]\T}x-\lambda x^{\T}\Cov^{[i]}x\geq z^*(\mu^{[i]},\Cov^{[j]})-\gamma,\quad
(i=1,\dots k)\nonumber
\end{align}
This shows that the relative robust problem \eqref{eq:rrp1} can be solved
by first obtaining the optimal values $z^*(\mu^{[i]},\Cov^{[i]})$ and then
solving problem \eqref{eq:rrp1finite} as a second-order cone programming problem (SOCP) 
with $k$ convex quadratic constraints. While this process may be tedious and time consuming, the 
resulting formulation is a single level deterministic optimization problem that can be solved 
efficiently for realistic problem sizes both in terms of the dimension and the number of parameter scenarios. 

Next, we consider polytopic uncertainty sets, namely those defined as the
convex hull of a finite number of extreme scenarios, 
\begin{equation}\label{given as}
\cU=\conv\bigl(\{(\mu^{[i]},\Cov^{[i]}):\,i=1,\dots,k\}\bigr). 
\end{equation}
Using Lemma \ref{lem:simple} we can now observe that the relative robust 
model \eqref{eq:rrp1} that corresponds to this uncertainty set is also solved by 
\eqref{eq:rrp1finite}. This follows immediately from the following corollary:

\begin{corollary} \label{lem:polytopic}
For $\cU$ given in \eqref{given as} and $x\in\R^n$, the following are equivalent, 
\begin{itemize}
\item[i) ] $\mu^{\T} x - \lambda x^{\T}\Cov x \geq z^*(\mu,\Cov) - \gamma$ for all 
$(\mu,\Cov)\in\cU$, 
\item[ii) ]  $\mu^{[i]\T} x - \lambda x^{\T}\Cov^{[i]} x \geq z^*(\mu^{[i]},\Cov^{[i]}) - \gamma$ 
for $(i=1,\dots,k)$. 
\end{itemize}
\end{corollary}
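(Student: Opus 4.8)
The plan is to observe that the quantity whose supremum over $\cU$ the inequality in i) controls is convex in the parameter pair $(\mu,\Cov)$, so that its maximum over the polytope $\cU$ is attained at one of the generating points $(\mu^{[i]},\Cov^{[i]})$. The implication i) $\Rightarrow$ ii) is immediate, since each $(\mu^{[i]},\Cov^{[i]})$ lies in $\cU$. For the converse, I would rewrite the inequality in i) as $\gamma\geq h_x(\mu,\Cov)$, where
\[
h_x(\mu,\Cov):=z^*(\mu,\Cov)-\mu^{\T}x+\lambda x^{\T}\Cov x .
\]
Then i) is equivalent to $\gamma\geq\sup_{(\mu,\Cov)\in\cU}h_x(\mu,\Cov)$ and ii) to $\gamma\geq\max_{1\le i\le k}h_x(\mu^{[i]},\Cov^{[i]})$, so it suffices to prove that these two quantities coincide.

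Next I would check that $h_x$ is convex on $\cU$. For each fixed $y\in\cX$ the map $(\mu,\Cov)\mapsto\mu^{\T}y-\lambda y^{\T}\Cov y$ is affine, hence convex; so Lemma \ref{lem:simple}, applied with this family of functions and the convex set $\cU$, shows that $z^*(\mu,\Cov)=\sup_{y\in\cX}(\mu^{\T}y-\lambda y^{\T}\Cov y)$ is a convex function on $\cU$. (If $z^*$ takes the value $+\infty$ at some point of $\cU$, then both i) and ii) fail for every finite $\gamma$ and the equivalence is trivial, so we may assume $z^*$ is finite on $\cU$.) Since $(\mu,\Cov)\mapsto-\mu^{\T}x+\lambda x^{\T}\Cov x$ is affine in $(\mu,\Cov)$, adding it to $z^*$ preserves convexity, and $h_x$ is convex on $\cU$.

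Finally I would invoke the standard fact that a convex function on a polytope attains its maximum at a generating point. Concretely, write an arbitrary $(\mu,\Cov)\in\cU=\conv\{(\mu^{[i]},\Cov^{[i]}):i=1,\dots,k\}$ as $\sum_{i=1}^{k}\theta_i(\mu^{[i]},\Cov^{[i]})$ with $\theta_i\geq0$ and $\sum_i\theta_i=1$; convexity of $h_x$ then gives $h_x(\mu,\Cov)\leq\sum_i\theta_i\,h_x(\mu^{[i]},\Cov^{[i]})\leq\max_{1\le i\le k}h_x(\mu^{[i]},\Cov^{[i]})$. Taking the supremum over $(\mu,\Cov)\in\cU$ yields $\sup_{(\mu,\Cov)\in\cU}h_x\leq\max_i h_x(\mu^{[i]},\Cov^{[i]})$, and the reverse inequality is clear because each generating point lies in $\cU$; hence the two quantities agree and ii) $\Rightarrow$ i). The proof is essentially routine, and the only point that requires care is the convexity and finiteness of $z^*$ on all of $\cU$ without assuming $\cX$ compact — but this is exactly what the supremum formulation in Lemma \ref{lem:simple} is designed to deliver, so the main obstacle is already disposed of by that lemma, leaving only Jensen's inequality.
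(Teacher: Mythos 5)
Your proof is correct and follows essentially the same route as the paper's: both reduce the claim to the convexity of $z^*$ on $\cU$ via Lemma \ref{lem:simple} together with the linearity of $(\mu,\Cov)\mapsto\mu^{\T}x-\lambda x^{\T}\Cov x$, and then apply Jensen's inequality to a convex combination of the generating scenarios. Your repackaging of the computation as ``a convex function on a polytope attains its maximum at a generating point,'' and your explicit handling of the case $z^*=+\infty$, are cosmetic refinements of the same argument.
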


\begin{proof}
We only need to show that ii) implies i), as the reverse implication is trivial. 
For each $(\mu,\Cov)\in\cU$ there exist weights $\alpha^{[i]}\geq 0$ such that 
$\sum_{i=1}^k\alpha^{[i]}=1$ and $(\mu,\Cov)=\sum_{i=1}^k\alpha^{[i]}(\mu^{[i]},
\Cov^{[i]})$. Multiplying each inequality in ii) by $\alpha^{[i]}$ and taking the sum, 
one obtains the required inequality 
\begin{equation*}
\mu^{\T}x - \lambda x^{\T}\Cov x\geq 
\sum_{i=1}^k\alpha^{[i]} z^*(\mu^{[i]},\Cov^{[i]})-\gamma
\geq z^*(\mu,\Cov ) -\gamma, 
\end{equation*}
where the second inequality follows from the linearity of the function $(\mu,\Cov)\mapsto 
\mu^{\T}x-\lambda x^{\T}\Cov x$ and application of Lemma \ref{lem:simple}. 
\end{proof}

Thus, when the uncertainty set is given as a convex hull the relative robust problem
\eqref{eq:rrp1} is tractable. Similar results hold for Models \eqref{relrob:minvar version1}, 
\eqref{relrob:minvar version2}, \eqref{relrob:maxret version1} and 
\eqref{relrob:maxret version2}. The reader will find it easy to work out the details.

\subsection{Solving Problem (\ref{eq:RRSharpe})}\label{ssec:Sharpe}

Recall that in the case of Model \eqref{eq:RRSharpe}, we assumed the uncertainty set to be of 
the form $\cU=\cU_{\mu}\times\{\Cov\}$, that is, we assumed the covariance matrix $\Cov$ to 
be known with certainty. In the case where $\cU_{\mu}$ is once again given by a finite number 
of extreme scenarios
\begin{equation*}
\cU_{\mu}=\bigl\{\mu^{[i]}:\,i=1,\dots,k\bigr\},
\end{equation*}
the $k$ values $z(\mu^{[i]})$ need to be computed by solving convex quadratic programming 
problems of the form \eqref{convexification}, and then \eqref{eq:RRSharpe} turns into the convex 
quadratic programming problem 
\begin{align}
\min_{\gamma\in\R, y\in\R^n}\,&\gamma\label{aber hallo}\\
\text{s.t. }&(\mu^{[i]} -r  e)^{\T}y\geq z^*(\mu^{[i]}) -\gamma,\quad (i=1,\dots,k)\nonumber\\
&y\in\R_+\cX,\nonumber\\
&y^{\T}\Cov y\leq 1.\nonumber
\end{align}

In the case where $\cU_{\mu}$ is a convex hull of extreme scenarios 
\begin{equation*}
\cU_{\mu}=\conv\bigl(\bigl\{\mu^{[i]}:\,i=1,\dots,k\bigr\}\bigr), 
\end{equation*}
one can once again exploit the fact that the function $\mu\mapsto z^*(\mu)$ is convex, so that 
Lemma \ref{lem:simple} implies that \eqref{aber hallo} solves Problem \eqref{eq:RRSharpe}. 

\section{Ellipsoidal Uncertainty Sets}\label{sec:ellipsoidal}

In the remaining sections we assume the covariance matrix $\Cov$ to be known with certainty. 
The vector of expected returns $\mu$ is assumed to be uncertain and lie in an ellipsoidal 
uncertainty set 
\begin{equation*}
\cU_{\mu}=\left\{\overline{\mu}+Mu:\, \|u\|\leq 1\right\},
\end{equation*}
where $M$ is a $n\times k$ matrix with $k\leq n$. Although the ellipsoid $\cU$ need not be 
full-dimensional, in applications it is often natural to choose $k=n$. Throughout this section we treat 
$u$ or $\mu=\mu(u)=\overline{\mu}+Mu$ as the vector of uncertain model parameters interchangeably. Further, we assume that the set of feasible decision vectors is of the form 
\begin{equation*}
\cX=\left\{x\in\R^n:\,Fx=f,\,Gx\leq g\right\}, 
\end{equation*}
where $F\in\R^{m_f\times n}$ has full row rank, $f\in\R^{m_f}$, $G\in\R^{m_g\times n}$ and  $g\in\R^{m_g}$. We write $F_i$ and $G_i$ for the $i$-th rows of $F$ and $G$ respectively, and 
\begin{align*}
f_{\mu}(x)&:=\mu^{\T}x-\lambda x^{\T}\Cov x,\\
z^*(\mu)&:=\max_{y\in \cX}f_{\mu}(y),\\
R(x)&:=\max_{\mu\in\cU_{\mu}}z^*(\mu)-f_{\mu}(x).
\end{align*} 
The relative robust problem we wish to solve is then given by 
\begin{equation*}
\text{(RRP)}\quad\min_{x\in\cX}\,R(x).
\end{equation*}

\subsection{Copositivity Cones}\label{sec:copositivity}

We begin by introducing the technical tools that make an analysis 
of (RRP) possible. Most of the notation is adopted from the elucidating 
paper of Sturm and Zhang \cite{sturm}. 

If $D$ is subset of $\R^n$, let 
\begin{equation*}
{\mathscr H}(D):=\closure\left\{z=\left[\begin{smallmatrix}x\\ \tau
\end{smallmatrix}\right]\in\R^{n+1}:\,\tau>0,\,\tau^{-1}x\in D\right\}
\end{equation*}
be its homogenization, where $\closure(\cdot)$ denotes the topological 
closure. Let $q:x\mapsto x^{\T}Ax+2b^{\T}x+c$ 
be an arbitrary quadratic polynomial on $\R^n$, where 
$A\in\Sym{n}$ is a symmetric $n\times n$ matrix, 
$b\in\R^n$ and $c\in\R$, and let 
\begin{equation*}
{\mathscr M}(q):=\begin{bmatrix}A&b\\b^{\T}&c\end{bmatrix}.
\end{equation*}
Then 
\begin{equation*}
q(x)=\left[\begin{smallmatrix}x\\1\end{smallmatrix}\right]^{\T}
{\mathscr M}(q)\left[\begin{smallmatrix}x\\1\end{smallmatrix}\right]
=\left[\begin{smallmatrix}x\\1\end{smallmatrix}\right]
\left[\begin{smallmatrix}x\\1\end{smallmatrix}\right]^{\T}\bullet
{\mathscr M}(q), 
\end{equation*}
where we write $X\bullet Y=\trace(X^{\T}Y)$ for the 
trace inner product of two matrices of equal size (this inner product 
is the polarization 
of the Frobenius norm). In what follows we will refer to ${\mathscr M}(q)$ 
as the {\em matrix representation} of $q(\cdot)$. This defines a 
1-1 correspondence between the set of quadratic functions on $\R^n$ and 
the set $\Sym{n+1}$ of symmetric matrices of size $(n+1)$. In the 
sequel we will write $X\succeq 0$ if $X$ is a symmetric 
positive semidefinite matrix and 
\begin{equation*}
{\mathscr S}_+^{(n+1)}:=\{X\in\Sym{n+1}:\,X\succeq 0\}. 
\end{equation*}
Let 
\begin{align*}
\mathscr{FC}_+(D)&:=\left\{{\mathscr A}\in\Sym{n+1}:\,
[\begin{smallmatrix}x\\1\end{smallmatrix}]^{\T}
{\mathscr A}[\begin{smallmatrix}x\\1\end{smallmatrix}]\geq 0\;\;
\forall\,x\in D\right\}\\
&=\left\{{\mathscr A}\in\Sym{n+1}:
\,z^{\T}{\mathscr A}z\geq 0\;\;\forall\,z\in{\mathscr H}(D)\right\}
\end{align*}
be the set of quadratic functions that are nonnegative on $D$. In the 
case where $D=\{x:\,q(x)\geq 0\}$ for some quadratic polynomial 
$q$, $\mathscr{FC}_+(D)$ is called the set of quadratic functions 
copositive with $q$. Here we use an abuse of language and 
speak of $\mathscr{FC}_+(D)$ as the {\em copositivity cone} associated 
with $D$ when $D$ is a more general set.\\

\begin{lemma}[Corollary 1, \cite{sturm}]\label{lem:sturm}
$\mathscr{FC}_+(D)=\conv\{zz^{\T}:\,z\in{\mathscr H}(D)\}^*$.\\
\end{lemma}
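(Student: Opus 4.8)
\emph{Proof plan.} The strategy is to unwind the definition of the copositivity cone and recognize it directly as a dual cone, after which the statement reduces to the elementary fact that passing to the convex hull does not change the dual. First I would use the identity $z^{\T}\mathscr{A}z=\trace(\mathscr{A}zz^{\T})=zz^{\T}\bullet\mathscr{A}$, valid for every $z\in\R^{n+1}$ and $\mathscr{A}\in\Sym{n+1}$, to rewrite the second description of $\mathscr{FC}_+(D)$ given in the text as
\[
\mathscr{FC}_+(D)=\bigl\{\mathscr{A}\in\Sym{n+1}:\,zz^{\T}\bullet\mathscr{A}\geq 0\ \ \forall\,z\in{\mathscr H}(D)\bigr\}.
\]
Setting $S:=\{zz^{\T}:\,z\in{\mathscr H}(D)\}\subseteq\Sym{n+1}$, this says precisely that $\mathscr{FC}_+(D)=S^{*}$, where the dual is taken with respect to the trace inner product $\bullet$ on $\Sym{n+1}$, consistently with the notational conventions fixed in the introduction.

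Second I would invoke the fact that for any subset $S$ of a finite-dimensional inner product space one has $S^{*}=(\conv S)^{*}=(\closure\cone\conv S)^{*}$: a linear functional $X\mapsto X\bullet\mathscr{A}$ that is nonnegative on $S$ is automatically nonnegative on every convex combination $\sum_i\lambda_i X_i$ with $\lambda_i\geq 0$, since $\sum_i\lambda_i X_i\bullet\mathscr{A}=\sum_i\lambda_i(X_i\bullet\mathscr{A})\geq 0$, and the same continuity argument extends nonnegativity to the closure. Applying this with the $S$ above gives
\[
\mathscr{FC}_+(D)=S^{*}=(\conv S)^{*}=\conv\{zz^{\T}:\,z\in{\mathscr H}(D)\}^{*},
\]
which is the asserted identity. (One may remark in passing that ${\mathscr H}(D)$ is a closed cone, so $S$ is already invariant under multiplication by nonnegative scalars and $\conv S$ is in fact a convex cone; this is not needed for the argument.)

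I do not expect a serious obstacle here: the proof is essentially a translation between the ``quadratic functional nonnegative on $D$'' viewpoint and the ``dual cone of rank-one matrices'' viewpoint. The only points that require a modicum of care are verifying that the inner product underlying the dual $\cdot^{*}$ is indeed $\bullet$, and checking the passage $S^{*}=(\conv S)^{*}$, both of which are immediate from the displayed computation above.
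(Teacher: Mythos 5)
Your proposal is correct and follows essentially the same route as the paper's own proof: both rewrite $z^{\T}\mathscr{A}z$ as the trace inner product $zz^{\T}\bullet\mathscr{A}$, identify $\mathscr{FC}_+(D)$ as the dual of the set of rank-one matrices $\{zz^{\T}:\,z\in{\mathscr H}(D)\}$, and conclude via the standard fact that dualization is unaffected by passing to the convex hull. No gaps.
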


\begin{proof}
Lemma \ref{lem:sturm} is the same as Corollary 1 in 
Sturm \& Zhang \cite{sturm}. 
Here we give an alternative proof for completeness. We have 
\begin{align*}
\mathscr{FC}_+(D)&=\{X\in\Sym{n+1}:\,
z^{\T}Xz\geq0,\;\forall\,z\in{\mathscr H}(D)\}\\
&=\bigcap_{z\in{\mathscr H}(D)}\{X\in\Sym{n+1}:\,
\langle X;zz^{\T}\rangle\geq 0\}\\
&=\left(\conv\{zz^{\T}:\,z\in{\mathscr H}(D)\}\right)^*.
\end{align*}
\end{proof}
\hspace{1cm}\\

Unfortunately, for general $D$, the cone $\conv\{zz^{\T}:\,z\in{\mathscr H}(D)\}^*$ 
does not have a tractable characterization. For example, when 
$D=\R^n_+$ then 
\begin{equation*}
{\mathscr H}(D)=\closure\left(\left\{\left[\begin{smallmatrix}x\\ \tau
\end{smallmatrix}\right]:\,\tau>0, \tau^{-1}x\in\R^n_+\right\}\right)
=\R^{n+1}_+, 
\end{equation*}
and 
\begin{equation*}
\mathscr{FC}_+(D)=\conv\{zz^{\T}:\,z\in\R^{n+1}_+\}^*
\end{equation*}
is the {\em co-positive cone}. Testing whether a given matrix belongs 
to this cone is co-NP-hard \cite{murty-kabadi}. We will see in the sequel that (RRP) is equivalent 
to solving a conic optimization problem with a conic constraint of type 
$\mathscr{FC}_+(D)$ for a convex set $D$ defined by multiple linear 
and one quadratic inequality. 
If the cone $\mathscr{FC}_+(D)$ is intractable, then the 
conic formulation of (RRP) is intractable too. To render the relative robust 
approach computationally viable in this situation, we will identify a 
tractable convex cone 
\begin{equation*}
K\subseteq\mathscr{FC}_+(D)
\end{equation*}
which can be used in an inner approximation of (RRP). 
Most of the technical details regarding the construction of $K$ are 
deferred to Section \ref{appendix}

\subsection{A Conic Formulation of (RRP)}\label{sec:reformulation}

Introducing an artificial variable $\gamma$, (RRP) is easily 
seen to be equivalent to 
\begin{align*}
\text{(RRP.i)}\quad\min_{x,\gamma}\,&\gamma\\
\text{s.t.}\quad&x\in\cX\\
&\gamma\geq z^*(\mu)-f_{\mu}(x)\quad\forall\mu\in
\cU_{\mu}.
\end{align*}
Since $z^*(\mu):=\max_{y\in \cX}f_{\mu}(y)$, this problem can be further 
rewritten as 
\begin{align}
\text{(RRP.ii)}\qquad\min_{x,\gamma}\,&\gamma\nonumber\\
\text{s.t.}\quad&x\in\cX\nonumber\\
&\gamma\geq f_{\mu}(y)-f_{\mu}(x)\quad\forall\mu\in
\cU_{\mu},\;y\in\cX.\label{semi-infinite}
\end{align}
This is a semi-infinite optimization problem, that is, a finite-dimensional 
problem with infinitely many constraints. To render this problem amenable 
to numerical computations, we have to replace these infinitely many 
constraints by a finitely many. 

Parameterizing $\mu$ by $u$, the set of values of $(\mu(u),y)$ that 
appear in the right-hand side of \eqref{semi-infinite} corresponds to 
\begin{equation}\label{set of values}
D:=\left\{\left[\begin{smallmatrix}u\\y\end{smallmatrix}\right]:\,
u^{\T}u\leq 1,\,Fy=f,\,Gy\leq g\right\}. 
\end{equation}
It follows from Lemma 4 of Sturm-Zhang \cite{sturm} that the 
homogenization of this set is characterized by 
\begin{align*}
{\mathscr H}(D)=\left\{\left[\begin{smallmatrix}u\\y\\ \tau\end{smallmatrix}
\right]\right.&\in\R^{k+n+1}:\,\tau\geq 0,\, 
\left[\begin{smallmatrix}u\\y\\ \tau\end{smallmatrix}\right]^{\T}
\left[\begin{smallmatrix}-\I&0&0\\0&0&0\\0&0&1\end{smallmatrix}
\right]\left[\begin{smallmatrix}u\\y\\ \tau\end{smallmatrix}\right]\geq 0,\\
&\left[\begin{smallmatrix}0\\-F_i\\f_i\end{smallmatrix}\right]^{\T}
\left[\begin{smallmatrix}u\\y\\ \tau\end{smallmatrix}\right]=0,\,(i=1,\dots,m_f),\\
&\left.\left[\begin{smallmatrix}0\\-G_i\\g_i\end{smallmatrix}\right]^{\T}
\left[\begin{smallmatrix}u\\y\\ \tau\end{smallmatrix}\right]\geq 0,\,(i=1,\dots,m_g)
\right\}.
\end{align*}
Further, for fixed $(x,\gamma)$, the expression 
\begin{align*}
q_{x,\gamma}(u,y)&:=\gamma-f_{\mu}(y)+f_{\mu}(x)\\
&=\left[\begin{smallmatrix}u\\y\end{smallmatrix}\right]^{\T}
\left[\begin{smallmatrix}0&-\Half M^{\T}\\
-\Half M&\lambda \Cov\end{smallmatrix}\right]
\left[\begin{smallmatrix}u\\y\end{smallmatrix}\right]
+\left[\begin{smallmatrix}M^{\T}x\\- \overline{\mu}\end{smallmatrix}\right]^{\T}
\left[\begin{smallmatrix}u\\y\end{smallmatrix}\right]
+(\gamma-\lambda x^{\T}\Cov x+ \overline{\mu}^{\T}x)
\end{align*}
is a quadratic function of $(u,y)$ whose matrix representation in the 
homogenized space is given by 
\begin{equation*}
{\mathscr M}_{x,\gamma}:=
{\mathscr M}(q_{x,\gamma})=\left[\begin{smallmatrix}0&-\Half M^{\T}&\Half M^{\T}x\\
-\Half M&\lambda \Cov &-\Half\overline{\mu}\\
\Half x^{\T}M&-\Half\overline{\mu}^{\T}&(\gamma-\lambda x^{\T}\Cov x+
\overline{\mu}^{\T}x)\end{smallmatrix}\right].
\end{equation*}
Using Lemma \ref{lem:sturm}, Condition \eqref{semi-infinite} is seen to 
be the same as 
\begin{equation}\label{curly M}
{\mathscr M}_{x,\gamma}\in\conv\left\{zz^{\T}:\,z\in{\mathscr H}(D)\right\}^*.
\end{equation}
Therefore, (RRP.ii) can be written in conic form,
\begin{align*}
\text{(RRP.iii)}\qquad\min_{x,\gamma}\,&\gamma\\
\text{s.t.}\quad&x\in\cX,\\
&{\mathscr M}_{x,\gamma}\in\conv\left\{zz^{\T}:\,z\in{\mathscr H}(D)\right\}^*.
\end{align*}

\subsection{A Tractable Inner Approximation}\label{tractable inner}

Let $H$ be the trailing $n\times(n-m_f)$ block of 
the orthogonal factor in the QR-decomposition $[\begin{smallmatrix}\star&H
\end{smallmatrix}]R$ of $F^{\T}$, so that the columns of $H$ form a 
basis of $\ker(F)$. Further, let $x_p\in\R^n$ be a particular solution of 
the system $Fy=f$, and let us write $r:=n-m_f$, so that 
\begin{equation*}
\{x\in\R^n:\,Fx=f\}=\{x_p+H w:\,w\in\R^{r}\}. 
\end{equation*}
Let $p_0^{\T}=[\begin{smallmatrix}0&1\end{smallmatrix}]$, where 
$0$ is a zero row vector of size $k+r$, and let $p_i^{\T}$ $(i=1,\dots,m_g)$ 
be the row vectors of the matrix $[\begin{smallmatrix}0&-GH&g-Gx_p
\end{smallmatrix}]$, where $0$ is now a zero matrix of size $m_g\times k$. 
And finally, let $\Cov=U^{\T}U$ be the Cholesky factorization of $\Cov$. With 
this notation, Corollary \ref{approximation corollary} of Section \ref{appendix} shows 
that the following problem is an inner approximation of (RRP), 
where the minimisation is over the decision variables 
$w\in\R^r$, $\gamma,s,\eta,\xi_{ij}\in\R$, $(i\neq j=0,\dots,m_g)$, and 
$\tau_i\in\R$, $u_i\in\R^k$, $(i=0,\dots,m_g)$:
\begin{align*}
\text{(ARRP)}\;\min_{w,\gamma,s,\eta,\xi,\tau,u}\,&\gamma\\
\text{s.t.}\qquad&\hspace{-0.5cm}
g-Gx_p-GHw\in\R^{m_g}_+,\\
&\hspace{-0.5cm}\eta,\xi_{ij}\in\R_+,\quad (i\neq j=0,\dots,m_g),\\
&\hspace{-0.5cm}[\begin{smallmatrix}\tau_i\\u_i\end{smallmatrix}]\in 
L_{k+1},\quad (i=0,\dots,m_g),\\
&\hspace{-.5cm}\begin{bmatrix}0\\0\\Ux_p\end{bmatrix}
+\begin{bmatrix}\frac{1}{\sqrt{2}}&\frac{1}{\sqrt{2}}&\\
\frac{1}{\sqrt{2}}&-\frac{1}{\sqrt{2}}\\&&UH\end{bmatrix}
\begin{bmatrix}\Half\\s\\w\end{bmatrix}\in L_{n+2},\\
&\hspace{-0.5cm}\begin{bmatrix}0&-\Half M^{\T}H&\Half M^{\T}Hw\\
-\Half H^{\T}M&\lambda H^{\T}\Cov H&H^{\T}(\lambda \Cov x_p
-\Half\overline{\mu})\\
\Half w^{\T}H^{\T}M&(\lambda \Cov x_p-\Half\overline{\mu})^{\T}H&
\gamma-\lambda s+(\overline{\mu}-2\lambda \Cov x_p)^{\T}H
w\end{bmatrix}\\
&\hspace{1.5cm}-\eta\left[\begin{smallmatrix}-\I&&\\&0_r&\\&&1\end{smallmatrix}\right]
-\sum_{i\neq j=0}^{m_g}\xi_{ij}\left(p_i p_j^{\T}+p_j p_i^{\T}\right)\\
&\hspace{3.5cm}+\sum_{i=0}^{m_g}\left(p_i\left[\begin{smallmatrix}u_i\\0\\ \tau_i
\end{smallmatrix}\right]^{\T}+\left[\begin{smallmatrix}u_i\\0\\ \tau_i
\end{smallmatrix}\right]p_i^{\T}\right)\in\Sym{k+r+1}_+.
\end{align*}
By construction, every solution $(w,\gamma,s,\eta,\xi,\tau,u)$ to (ARRP) 
provides a feasible solution $(x_p+Hw,\gamma)$ to (RRP.iii). Since 
the feasible set of (ARRP) is thus smaller than the feasible set of (RRP.iii), 
a (ARRP)-optimal solution $(w^*,\gamma^*,s^*,\eta^*,\xi^*,\tau^*,u^*)$ 
does not necessarily correspond to a (RRP)-optimal 
is optimal for (ARRP), this does 
not necessarily imply that $(x_p+Hw^*,\gamma^*)$ is optimal for (RRP.iii). 
However, since (ARRP) is equivalent to (RRP.iii) in the case $m_g\in\{0,1\}$ 
(see Corollary \ref{approximation corollary}), 
it is reasonable to expect that $(x_p+Hw^*,\gamma^*)$ is a feasible 
solution to (RRP.iii) which is quite close to optimal even in the case 
$m_g\geq 2$. 

Note that each of the constraints of (ARRP) is formulated as a conic inequality of an expression 
that is linear in the decision variables. 
Thus, the great advantage of working with the model (ARRP) rather than (RRP) is the fact 
that, while (RRP) may be NP-hard, (ARRP) is readily solvable via standard 
polynomial-time conic programming implementations such as SDTP3 
\cite{sdtp3} or Sedumi \cite{sedumi}. 

\section{Tightness of Inner Approximations}\label{appendix}

In this section we discuss some of the technical details and tightness results surrounding 
the inner approximation of the cones $\mathscr{FC}_+(D)$ used in Section \ref{sec:ellipsoidal}. 

\subsection{The Case of General $D$}\label{arbitrary}

We begin with the discussion of inner approximations of $\mathscr{FC}_+(D)$ 
where $D\subseteq\R^n$ is an arbitrary set.

\begin{lemma}\label{lem:inner}
$\mathscr{FC}_+(D)^*\subseteq\{X\in\Sym{n+1}_+:\,
Xw\in{\mathscr H}(D)\;\forall\,w\in{\mathscr H}(D)^*\}$.
\end{lemma}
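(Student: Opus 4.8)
The goal is to show $\mathscr{FC}_+(D)^*\subseteq\{X\in\Sym{n+1}_+:\,Xw\in{\mathscr H}(D)\;\forall\,w\in{\mathscr H}(D)^*\}$. The plan is to start from the explicit description of $\mathscr{FC}_+(D)$ established in Lemma~\ref{lem:sturm}, namely $\mathscr{FC}_+(D)=\conv\{zz^{\T}:\,z\in{\mathscr H}(D)\}^*$. Dualizing both sides and using the fact that the double dual of a convex cone is its closed conic hull, we get
\[
\mathscr{FC}_+(D)^*=\mathscr{FC}_+(D)^{**}=\closure\conv\{zz^{\T}:\,z\in{\mathscr H}(D)\}.
\]
So the first step is to reduce the claim to showing that $\closure\conv\{zz^{\T}:\,z\in{\mathscr H}(D)\}$ is contained in the right-hand side set; and since that right-hand side set is closed (it is an intersection of the closed cone $\Sym{n+1}_+$ with preimages of closed sets under continuous linear maps $X\mapsto Xw$) and convex (intersection of convex sets, noting $\{X:\,Xw\in{\mathscr H}(D)\}$ is convex because ${\mathscr H}(D)$ is a closed convex cone), it suffices to check that each generator $zz^{\T}$ with $z\in{\mathscr H}(D)$ lies in the right-hand side.

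The second, and genuinely easy, step is this generator check. Fix $z\in{\mathscr H}(D)$. Clearly $zz^{\T}\succeq 0$, so $zz^{\T}\in\Sym{n+1}_+$. Now take any $w\in{\mathscr H}(D)^*$. Then $(zz^{\T})w=z(z^{\T}w)=(\langle z,w\rangle)\,z$, and $\langle z,w\rangle\geq 0$ since $z\in{\mathscr H}(D)$ and $w\in{\mathscr H}(D)^*$. Because ${\mathscr H}(D)$ is a cone (it is defined as a closure of a cone), a nonnegative multiple of $z\in{\mathscr H}(D)$ is again in ${\mathscr H}(D)$, so $(zz^{\T})w\in{\mathscr H}(D)$. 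This shows $zz^{\T}$ belongs to the right-hand side set, completing the argument by the closure/convexity reduction above.

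I expect the main technical obstacle — really the only place requiring care — to be the passage through closures: one must make sure the right-hand side set is closed so that containing the generators forces containing $\closure\conv\{zz^{\T}:\,z\in{\mathscr H}(D)\}$. This rests on ${\mathscr H}(D)$ being closed, which holds by its very definition as a topological closure, and on continuity of the maps $X\mapsto Xw$. A secondary point worth stating explicitly is why ${\mathscr H}(D)^*$ is nonempty/well-behaved — but nothing beyond standard convex-cone duality is needed, and the bidual identity $C^{**}=\closure\conv C$ for $C$ a set of matrices is exactly the tool that makes the proof short. No separating-hyperplane machinery or specific structure of $D$ is required; the inclusion is purely formal.
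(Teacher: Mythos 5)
Your proof is correct and follows essentially the same route as the paper's: both invoke Lemma~\ref{lem:sturm} to represent elements of $\mathscr{FC}_+(D)^*$ via the rank-one generators $zz^{\T}$ with $z\in{\mathscr H}(D)$, and then verify membership in the right-hand side through the identity $(zz^{\T})w=(z^{\T}w)z$ together with $z^{\T}w\geq 0$. The only (cosmetic) difference is that you reduce to the generators by observing that the right-hand side is a closed convex cone, whereas the paper writes a general element as a limit of finite conic combinations and checks these directly; note that both arguments rely on the convexity of ${\mathscr H}(D)$ (equivalently, of $D$), which is the setting in which the lemma is subsequently applied.
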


\begin{proof}
By Lemma \ref{lem:sturm}, any $X\in\mathscr{FC}_+(D)^*$ can be 
written as a limit $X=\lim_{j\rightarrow\infty}X_j$, where 
$X_j=\sum_{i=1}^{k_j}\xi_{ij} z_{ij} z_{ij}^{\T}$ for some 
$z_{ij}\in{\mathscr H}(D)$ and $\xi_{ij}\geq 0$ $(i=1,\dots,k_j)$. 
Clearly this implies that $X\succeq 0$, and since for any 
$w\in{\mathscr H}(D)^*$ we have 
\begin{equation*}
X_jw=\sum_{i=1}^{k_j}\xi_{ij}(z_{ij}^{\T}w)z_{ij}\in{\mathscr H}(D)
\end{equation*}
and ${\mathscr H}(D)$ is closed, $Xw\in{\mathscr H}(D)$. 
\end{proof}

Taking duals in the inclusion of Lemma \ref{lem:inner}, we obtain the 
following inner approximation of $\mathscr{FC}_+(D)$,
\begin{equation}\label{round 1}
{\mathscr FD}_+(D)\supseteq\Sym{n+1}+K^*,
\end{equation}
where 
\begin{equation}\label{K}
K=\{X\in\Sym{n+1}:\,Xw\in{\mathscr H(D)}\;
\forall\,w\in{\mathscr H}(D)^*\}.
\end{equation}
To make this result useful, we need to characterize $K^*$.

\begin{lemma}\label{kw**}
Let $C\subseteq\R^n$ be a closed convex cone and $w\in\R^n$. 
Then 
\begin{equation*}
\left\{X\in\Sym{n}:\,Xw\in C\right\}^*=\left\{vw^{\T}+wv^{\T}:\,
v\in C^*\right\}. 
\end{equation*}
\end{lemma}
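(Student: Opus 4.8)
The plan is to establish the two inclusions separately, as is standard for identities of dual cones. Write $L:=\{X\in\Sym{n}:\,Xw\in C\}$ and $N:=\{vw^{\T}+wv^{\T}:\,v\in C^*\}$; we want $L^*=\closure(\cone(N))$, but since $w$ is fixed and $C^*$ is a closed convex cone, one checks that $\cone(N)$ is already closed and convex, so it suffices to prove $L^*=N$ with $N$ understood as the set of all such sums (which is automatically a convex cone because $C^*$ is).

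First I would prove $N\subseteq L^*$, which is the easy direction. Take any $X=vw^{\T}+wv^{\T}$ with $v\in C^*$, and any $Y\in L$, so $Yw\in C$. Then $X\bullet Y=\trace\bigl((vw^{\T}+wv^{\T})Y\bigr)=2\,v^{\T}Yw\geq 0$ since $v\in C^*$ and $Yw\in C$; here I use symmetry of $Y$ so that $w^{\T}Yv=v^{\T}Yw$. Hence $X\in L^*$.

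The harder direction is $L^*\subseteq N$, and this is where the main obstacle lies. The clean way is to compute $L^*$ by duality rather than element-chasing: observe that $L$ can be written as the preimage $L=\Phi^{-1}(C)$ of the closed convex cone $C$ under the linear map $\Phi:\Sym{n}\to\R^n$, $\Phi(X)=Xw$. A standard conic-duality fact (valid without a constraint qualification when $C$ is a closed convex cone and one takes closures) gives $L^*=\closure\bigl(\Phi^{*}(C^{*})\bigr)$, where $\Phi^{*}:\R^n\to\Sym{n}$ is the adjoint of $\Phi$ with respect to the trace inner product on $\Sym{n}$ and the Euclidean inner product on $\R^n$. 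A direct computation identifies the adjoint: for $v\in\R^n$ and $X\in\Sym{n}$, $\langle \Phi(X),v\rangle=v^{\T}Xw=\tfrac12(vw^{\T}+wv^{\T})\bullet X$, so $\Phi^{*}(v)=\tfrac12(vw^{\T}+wv^{\T})$. Therefore $\Phi^{*}(C^{*})=\{\,\tfrac12(vw^{\T}+wv^{\T}):v\in C^{*}\}=N$ (the factor $\tfrac12$ is absorbed since $C^{*}$ is a cone), and the only remaining point is to argue that no closure is needed, i.e. that $N$ is already closed. This holds because $w$ is a fixed vector: if $v_jw^{\T}+wv_j^{\T}\to Z$, then applying both sides to $w$ gives $(w^{\T}w)v_j+(v_j^{\T}w)w\to Zw$; if $w\neq 0$ one solves for $v_j$ (project onto $\Span(w)$ and its complement) to see $v_j$ converges to some $v$, and $v\in C^{*}$ by closedness of $C^{*}$, so $Z\in N$; the degenerate case $w=0$ is trivial since then $L=\Sym{n}$, $L^*=\{0\}=N$. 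Combining the two inclusions yields the claimed identity.

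I expect the adjoint computation and the verification that $N$ is closed to be entirely routine; the only genuine subtlety is invoking the preimage formula $L^*=\closure(\Phi^{*}(C^{*}))$ in the correct generality (closed convex cone, no interior assumption), and an alternative if one wishes to avoid quoting it is to prove $L^*\subseteq N$ directly by a separation argument: if $X_0\in L^*\setminus N$, separate $X_0$ from the closed convex cone $N$ to get $Y_0\in\Sym{n}$ with $Y_0\bullet X_0<0\le Y_0\bullet(vw^{\T}+wv^{\T})$ for all $v\in C^{*}$, deduce $v^{\T}Y_0w\ge 0$ for all $v\in C^{*}$, hence $Y_0w\in C^{**}=C$, so $Y_0\in L$, contradicting $X_0\in L^*$. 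Either route closes the proof.
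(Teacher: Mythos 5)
Your proof is correct and follows essentially the same route as the paper's: both identify the cone as the preimage of $C$ under the linear map $\varphi_w(X)=Xw$, compute the adjoint $\varphi_w^*(v)=\tfrac12(vw^{\T}+wv^{\T})$, and conclude by conic duality. If anything you are more careful than the paper on the one delicate point, since the paper asserts that $\varphi_w^*(C^*)$ is closed merely because $\varphi_w^*$ is linear and $C^*$ is a closed convex cone (which fails for general linear images of closed cones), whereas you supply an actual argument -- recovering $v_j$ from a convergent sequence $v_jw^{\T}+wv_j^{\T}$ using that $w$ is fixed, and treating $w=0$ separately.
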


\begin{proof}
Consider the linear map $\varphi_w(X)=Xw$ from $\Sym{n}$ to 
$\R^{n}$. Endowing these spaces with their canonical inner products 
$\langle X,Y\rangle:=X\bullet Y$ and $\langle x,y\rangle:=x^{\T}y$, 
the adjoint map $\varphi_w^*:\R^{n+1}\rightarrow\Sym{n+1}$ is 
defined by the relation 
\begin{equation*}
\langle X,\varphi_w^*(v)\rangle=\langle\varphi_w(X),v\rangle,\quad
(X\in\Sym{n+1}, v\in\R^{n+1}).
\end{equation*}
The right-hand side in this equation equals 
$(Xw)^{\T}v=\Half\langle X,(wv^{\T}+vw^{\T})\rangle$, showing that 
\begin{equation}\label{adjoint}
\varphi_w^*(v)=\frac{1}{2}(wv^{\T}+vw^{\T}).
\end{equation} 
Now we have $Xw\in C$ if and only if 
$\langle\varphi_w(X),v\rangle\geq 0$ for all $v\in C^*$ (using biduality and 
the assumption that $C$ is a closed convex cone). Taking adjoints, 
this is further equivalent to 
$\langle X,\varphi^*_w(v)\rangle\geq 0$ for all $v\in C^*$, and finally to 
\begin{equation}\label{before dual}
X\in(\varphi_w^*(C^*))^*. 
\end{equation}
Since $C^*$ is a closed convex cone and 
$\varphi_w^*$ is a linear map between finite-dimensional vector spaces, 
$\varphi_w^*(C^*)$ is a closed convex cone, so that taking duals in 
\eqref{before dual} yields 
\begin{equation*}
\left\{X\in\Sym{n}:\,Xw\in C\right\}^*=\varphi_w^*(C^*).
\end{equation*}
Using \eqref{adjoint}, this is seen to be equivalent to the claim 
of the lemma. 
\end{proof}

\begin{lemma}\label{lem:k*}
Let $K$ be the cone defined in \eqref{K}. Then 
\begin{equation*}
K^*=\closure\left(\cone\{wv^{\T}+vw^{\T}:\,v,w\in{\mathscr H}(D)^*\}
\right).
\end{equation*}
\end{lemma}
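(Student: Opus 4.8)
The plan is to compute $K^*$ by dualizing the defining description of $K$ given in \eqref{K}. Observe that $K$ is an intersection of the cones $K_w := \{X \in \Sym{n+1} : Xw \in {\mathscr H}(D)\}$ over all $w \in {\mathscr H}(D)^*$. Since the dual of an intersection of closed convex cones is the closure of the sum (equivalently, the closure of the conic hull of the union) of the duals, the first step is to write
\begin{equation*}
K^* = \closure\Bigl(\cone\bigl(\textstyle\bigcup_{w\in{\mathscr H}(D)^*} K_w^*\bigr)\Bigr).
\end{equation*}
To make this rigorous I would note that $K = \bigcap_w K_w$, so $K^* = \bigl(\bigcap_w K_w\bigr)^* = \closure\bigl(\conv\bigcup_w K_w^*\bigr)$, using biduality for closed convex cones together with the standard fact $(C_1 \cap C_2)^* = \closure(C_1^* + C_2^*)$ extended to arbitrary families.

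The second step is to identify each $K_w^*$ using Lemma \ref{kw**}. Applying that lemma with $C = {\mathscr H}(D)$ — which is a closed convex cone by construction (${\mathscr H}(D)$ is defined as a topological closure, and homogenizations of convex sets are convex cones; here $D$ of \eqref{set of values} is convex, so ${\mathscr H}(D)$ is a closed convex cone) — and $n$ replaced by $n+1$, we get
\begin{equation*}
K_w^* = \{X\in\Sym{n+1}:\,Xw\in{\mathscr H}(D)\}^* = \{vw^{\T}+wv^{\T}:\,v\in{\mathscr H}(D)^*\}.
\end{equation*}
Substituting this into the expression for $K^*$ from the first step and absorbing the union over $w$ into the conic hull yields
\begin{equation*}
K^* = \closure\Bigl(\cone\bigl\{vw^{\T}+wv^{\T}:\,v\in{\mathscr H}(D)^*,\ w\in{\mathscr H}(D)^*\bigr\}\Bigr),
\end{equation*}
which is exactly the claimed formula (after noting that a conic combination of terms of the form $vw^{\T}+wv^{\T}$ with $v$ ranging over the cone $C^*={\mathscr H}(D)^*$ for a fixed $w$, together with a conic combination over $w$, gives precisely $\cone\{wv^{\T}+vw^{\T}:v,w\in{\mathscr H}(D)^*\}$).

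The main obstacle is the bookkeeping around closures: the identity $(\bigcap_i C_i)^* = \closure(\sum_i C_i^*)$ requires each $C_i$ to be a closed convex cone and the $C_i^*$ to be summed and then closed, and here the index set is infinite, so I would be careful to state it as $(\bigcap_i C_i)^* = \closure(\conv\bigcup_i C_i^*)$ and verify that each $K_w$ is indeed a closed convex cone (linearity of $X\mapsto Xw$ and closedness of ${\mathscr H}(D)$ give this immediately). A secondary point worth a sentence is that Lemma \ref{kw**} as stated requires $C$ closed convex; this holds for $C = {\mathscr H}(D)$ because homogenization as defined takes a topological closure. No delicate estimates are needed beyond these standard conic-duality facts, so once the closure manipulations are set up correctly the proof is short.
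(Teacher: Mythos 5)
Your proposal is correct and follows essentially the same route as the paper: decompose $K=\bigcap_{w\in{\mathscr H}(D)^*}K_w$ with $K_w=\{X:\,Xw\in{\mathscr H}(D)\}$, identify each $K_w^*$ via Lemma \ref{kw**}, and conclude with the dual-of-intersection identity $\bigl(\bigcap_w K_w\bigr)^*=\closure\bigl(\cone\bigcup_w K_w^*\bigr)$. Your extra care about the closure bookkeeping and about ${\mathscr H}(D)$ being a closed convex cone (which Lemma \ref{kw**} requires) is a welcome refinement of what the paper states more tersely, but it does not change the argument.
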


\begin{proof}
For each $w\in{\mathscr H}(D)^*$, let 
$K_w:=\{X\in\Sym{n}:\,Xw\in{\mathscr H}(D)\}$. Since 
${\mathscr H}(D)$ is a closed convex cone, Lemma \ref{kw**} 
shows that $K_w^*=\{vw^{\T}+wv^{\T}:\,v\in{\mathscr H}(D)^*\}$. 
Therefore, we have 
\begin{align*}
K^*=&\bigl(\bigcap_{w\in{\mathscr H}(D)^*}K_w\bigr)^*\\
&=\closure\Bigl(\cone\bigl(\bigcup_{w\in{\mathscr H}(D)^*}K_w^*
\bigr)\Bigr)\\
&=\closure\bigl(\cone\{wv^{\T}+vw^{\T}:\,v,w\in{\mathscr H}(D)^*\}
\bigr),
\end{align*}
as claimed. 
\end{proof}

\begin{example}\label{example 1}
Let $D=\{x\in\R^n:\,b^{\T}x\geq c\}$. Then ${\mathscr H}(D)=
\{z\in\R^{(n+1)}:\,a^{\T}z\geq 0\}$, where 
$a=[\begin{smallmatrix}b\\-c\end{smallmatrix}]$ and 
${\mathscr H}(D)^*=\cone\{a\}$. Furthermore, we have 
\begin{align*}
K&=\{X\in\Sym{n+1}:\,Xa\in{\mathscr H}
(D)\}\\
&=\{X\in\Sym{n+1}:\, a^{\T}Xa\geq 0\}\\
&=\{X\in\Sym{n+1}:\,X\bullet aa^{\T}\geq 0\},
\end{align*}
so that $K^*=\cone\{aa^{\T}\}$. This is confirmed by Lemma \ref{lem:k*}, 
which says that $K^*=\cone\{wv^{\T}+vw^{\T}:\,v,w\in\cone\{a\}\}=
\cone\{aa^{\T}\}$.
\end{example}

\begin{example}\label{example 2}
Let $D=\{x\in\R^n:\,\|x\|\leq 1\}$. Then 
\begin{equation*}
{\mathscr H}(D)=\left\{z\in\R^{n+1}:\,z^{\T}\left[\begin{smallmatrix}
-\I&0\\0&1\end{smallmatrix}\right]z\geq 0,\,\left[\begin{smallmatrix}
0\\1\end{smallmatrix}\right]^{\T}z\geq 0\right\}=\left[\begin{smallmatrix}
0&\I\\ 1&0\end{smallmatrix}\right]L_{n+1},
\end{equation*}
where $L_{n+1}$ is the $n+1$-dimensional Lorenz cone or second-order 
cone and the operator $\bigl[\begin{smallmatrix}0&1\\ \I&0\end{smallmatrix}
\bigr]$ permutes the first component of a vector into last place. 
Since $L_{n+1}$ is self-dual, we have 
\begin{equation}\label{6.**}
K=\{X\in\Sym{n+1}:\,Xw\in \left[\begin{smallmatrix}
0&\I\\ 1&0\end{smallmatrix}\right]L_{n+1}\;\forall\,w\in
\left[\begin{smallmatrix}
0&\I\\ 1&0\end{smallmatrix}\right]L_{n+1}\}.
\end{equation}
Lemma \ref{lem:k*} thus shows that $K^*=\cone\{wv^{\T}+vw:\,v,w
\in \left[\begin{smallmatrix}
0&\I\\ 1&0\end{smallmatrix}\right]L_{n+1}\}$.
\end{example}

Combining the inclusion \eqref{round 1} with Lemma \ref{lem:k*}, we 
arrive at the following result.

\begin{theorem}\label{prop:copositive}
For any set $D\subseteq\R^n$ it is true that 
\begin{equation*}
\mathscr{FC}_+(D)\supseteq\Sym{n+1}+
\closure\bigl(\cone\{wv^{\T}+vw^{\T}:\,v,w\in{\mathscr H}(D)^*\}\bigr).
\end{equation*}
\end{theorem}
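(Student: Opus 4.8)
The plan is to read off the theorem as the combination of the already-recorded inclusion \eqref{round 1} with the explicit description of $K^*$ provided by Lemma~\ref{lem:k*}, so that the only thing that really needs to be done is to make the dualization leading to \eqref{round 1} precise. First I would note that $\mathscr{FC}_+(D)$ is a closed convex cone: from the computation in the proof of Lemma~\ref{lem:sturm} it is the intersection $\bigcap_{z\in{\mathscr H}(D)}\{X\in\Sym{n+1}:\,X\bullet zz^{\T}\geq 0\}$ of closed homogeneous half-spaces, hence closed and convex, and consequently biduality holds with equality, $\bigl(\mathscr{FC}_+(D)^*\bigr)^*=\mathscr{FC}_+(D)$.

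Next I would dualize the inclusion of Lemma~\ref{lem:inner}, whose right-hand side is exactly $\Sym{n+1}_+\cap K$ with $K$ the cone defined in \eqref{K}. Dualization reverses inclusions, for arbitrary sets $A,B$ one has $(A\cap B)^*\supseteq A^*+B^*$, and $\Sym{n+1}_+$ is self-dual under the trace inner product, so
\begin{equation*}
\mathscr{FC}_+(D)=\bigl(\mathscr{FC}_+(D)^*\bigr)^*\supseteq\bigl(\Sym{n+1}_+\cap K\bigr)^*\supseteq\Sym{n+1}_++K^*,
\end{equation*}
which is precisely \eqref{round 1}. I would stress that only the easy half of the intersection-dual identity is used here, so no closure operation and no convexity of $K$ (equivalently, of ${\mathscr H}(D)$) is needed at this step.

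Finally I would substitute $K^*=\closure\bigl(\cone\{wv^{\T}+vw^{\T}:\,v,w\in{\mathscr H}(D)^*\}\bigr)$ from Lemma~\ref{lem:k*} into the inclusion just obtained, which gives
\begin{equation*}
\mathscr{FC}_+(D)\supseteq\Sym{n+1}_++\closure\bigl(\cone\{wv^{\T}+vw^{\T}:\,v,w\in{\mathscr H}(D)^*\}\bigr),
\end{equation*}
the claimed statement.

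Given how much is already in place, I expect no serious obstacle here; the result is essentially formal. The one point that deserves attention is the dualization in the second paragraph — specifically, verifying that $\mathscr{FC}_+(D)$ is closed so that $\bigl(\mathscr{FC}_+(D)^*\bigr)^*$ recovers it exactly, and organising the intersection-of-cones manipulation so that it goes through for an arbitrary $D$. Using only the trivial inclusion $(A\cap B)^*\supseteq A^*+B^*$, rather than the full identity $(A\cap B)^*=\closure(A^*+B^*)$ that would require $A,B$ to be closed convex cones, is what keeps the argument clean and valid without any extra hypotheses on $D$.
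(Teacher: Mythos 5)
Your proposal is correct and follows the paper's own route exactly: the paper likewise obtains the theorem by dualizing the inclusion of Lemma~\ref{lem:inner} to get \eqref{round 1} and then substituting the description of $K^*$ from Lemma~\ref{lem:k*}. The only difference is that you spell out the dualization step (biduality via closedness of $\mathscr{FC}_+(D)$, self-duality of $\Sym{n+1}_+$, and the elementary inclusion $(A\cap B)^*\supseteq A^*+B^*$), which the paper leaves implicit.
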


\subsection{Approximation Tightness in the General Case}\label{tightness}

The inner approximation of Theorem \ref{prop:copositive} is valid for 
arbitrary $D\subset\R^n$, and when ${\mathscr H}(D)$ can be explicity 
characterized it becomes a useful computational tool in conjunction with 
Carath\'eodory's theorem. It is therefore natural to ask if the inclusion 
given by the theorem is in fact an equality. Unfortunately, this is not 
true in general, as we shall now see. 

\begin{example}\label{example3}
Consider Example \ref{example 2} 
again, and note that in this case any $z\in{\mathscr H}(D)=L_{n+1}$ satisfies 
\begin{equation*}
\left[\begin{smallmatrix}-\I&0\\0&1\end{smallmatrix}\right]\bullet zz^{\T}
=z^{\T}\left[\begin{smallmatrix}-\I&0\\0&1\end{smallmatrix}\right]z
\geq 0,
\end{equation*}
and the same holds true for convex combinations of matrices $zz^{\T}$ with 
$z\in\left[\begin{smallmatrix}
0&\I\\ 1&0\end{smallmatrix}\right]L_{n+1}$. 
Thus, using this extra information, we could have used the 
tighter approximation 
\begin{equation*}
\mathscr{FC}_+(D)^*\subseteq\Sym{n+1}_+\cap K \cap C,
\end{equation*}
where $K$ is defined as in \eqref{6.**} and 
\begin{equation*}
C:=\left\{X\in\Sym{n+1}:\,
\left[\begin{smallmatrix}-\I&0\\0&1\end{smallmatrix}\right]\bullet 
X\geq 0\right\}.
\end{equation*}
This yields the inner approximation 
\begin{align}
\mathscr{FC}_+(D)&\supseteq\Sym{n+1}_+
+K^*+C^*\nonumber\\
&=\Sym{n+1}_+
+\cone\{wv^{\T}+vw^{\T}:\,v,w\in \left[\begin{smallmatrix}
0&\I\\ 1&0\end{smallmatrix}\right]L_{n+1}\}
+\cone\left\{\left[\begin{smallmatrix}-\I&0\\0&1
\end{smallmatrix}\right]\right\}
\label{better approx}
\end{align}
which is strictly larger than the inner approximation 
\begin{equation}\label{rhs}
\mathscr{FC}_+(D)\supseteq\Sym{n+1}_+
+\cone\{wv^{\T}+vw^{\T}:\,v,w\in \left[\begin{smallmatrix}
0&\I\\ 1&0\end{smallmatrix}\right]L_{n+1}\}
\end{equation}
provided by Theorem \ref{prop:copositive}, as the right-hand side of 
\eqref{rhs} does not contain the matrix 
$[\begin{smallmatrix}-\I&0\\0&1\end{smallmatrix}]$. 
Further, applying the s-Lemma (see Lemma \ref{s-Lemma} below) in 
the context of this Example, one finds 
\begin{equation}\label{for later use too}
{\mathscr FD}_+(D)=\Sym{n+1}_+ +
\cone\left\{\left[\begin{smallmatrix}-\I&0\\0&1\end{smallmatrix}
\right]\right\},
\end{equation}
an identity which was first discovered by Rendl-Wolkowicz \cite{rendl}. 
In other words, the approximation \eqref{better approx} is not only 
an improvement over that of Theorem \ref{prop:copositive}, but it is 
in fact tight, that is, the inclusion becomes an equality. Note that this 
also shows that 
\begin{equation}\label{for later use}
\{wv^{\T}+vw^{\T}:\,v,w\in \left[\begin{smallmatrix}
0&\I\\ 1&0\end{smallmatrix}\right]L_{n+1}\}\subset
\Sym{n+1}_+ 
+\cone\left\{\left[\begin{smallmatrix}-\I&0\\0&1\end{smallmatrix}
\right]\right\}.
\end{equation}
\end{example}

The following classical result from the theory of robust control theory 
was used in the analysis of the above example:

\begin{lemma}[s-Lemma, Yakubovich \cite{yakubovich}]
\label{s-Lemma}
If $D=\{x\in\R^n:\,q(x)\geq 0\}$, where $q(\cdot)$ is a quadratic 
function that takes a strictly positive value somewhere, then 
\begin{equation*}
{\mathscr FD}_+(D)=\Sym{n+1}_+ +
\cone\{{\mathscr M}(q)\}.
\end{equation*}
\end{lemma}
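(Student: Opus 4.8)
The plan is to recognize the statement as the classical $s$-lemma in matrix form. Under the bijection $q\leftrightarrow{\mathscr M}(q)$ between quadratics on $\R^n$ and matrices in $\Sym{n+1}$, together with the elementary fact that a quadratic $p$ is nonnegative on all of $\R^n$ if and only if ${\mathscr M}(p)\succeq 0$, the claimed identity ${\mathscr FC}_+(D)=\Sym{n+1}_++\cone\{{\mathscr M}(q)\}$ is equivalent to the following: for every quadratic $q_0$, one has $q_0\geq 0$ on $D=\{x:\,q(x)\geq 0\}$ if and only if $q_0-\mu q$ is nonnegative on $\R^n$ for some $\mu\geq 0$. The inclusion $\Sym{n+1}_++\cone\{{\mathscr M}(q)\}\subseteq{\mathscr FC}_+(D)$ is immediate -- if ${\mathscr M}(q_0)=P+\mu{\mathscr M}(q)$ with $P\succeq 0$, $\mu\geq 0$, then for $x\in D$ we have $[\begin{smallmatrix}x\\1\end{smallmatrix}]^{\T}{\mathscr M}(q_0)[\begin{smallmatrix}x\\1\end{smallmatrix}]=[\begin{smallmatrix}x\\1\end{smallmatrix}]^{\T}P[\begin{smallmatrix}x\\1\end{smallmatrix}]+\mu q(x)\geq 0$ -- so the entire content is the reverse inclusion, i.e.\ the existence of the multiplier $\mu$.

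For the reverse inclusion I would first homogenize. Since $q$ takes a strictly positive value somewhere, the form $z\mapsto z^{\T}{\mathscr M}(q)z$ on $\R^{n+1}$ is positive somewhere, and one checks -- using this positivity, as in Lemma~4 of \cite{sturm} -- that ${\mathscr H}(D)=\{z\in\R^{n+1}:\,z_{n+1}\geq 0,\ z^{\T}{\mathscr M}(q)z\geq 0\}$. Because quadratic forms are unchanged under $z\mapsto -z$, the condition ${\mathscr A}\in{\mathscr FC}_+(D)$, i.e.\ $z^{\T}{\mathscr A}z\geq 0$ for all $z\in{\mathscr H}(D)$, is equivalent to the sign-free homogeneous condition ``$z^{\T}{\mathscr A}z\geq 0$ whenever $z^{\T}{\mathscr M}(q)z\geq 0$''. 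I would then invoke Dines's theorem that the joint range $W:=\{(z^{\T}{\mathscr M}(q)z,\,z^{\T}{\mathscr A}z):\,z\in\R^{n+1}\}$ is a convex cone in $\R^2$ (this can be cited, or proved by the standard short perturbation argument). The reduced condition says precisely that $W$ is disjoint from the convex set $\{(a,b):\,a\geq 0,\ b<0\}$; a separation argument -- noting that $W$ is a cone through the origin -- yields $(\alpha,\beta)\neq(0,0)$ with $\alpha a+\beta b\geq 0$ on $W$ and $\alpha a+\beta b\leq 0$ on $\{a\geq 0,\ b\leq 0\}$, which forces $\alpha\leq 0\leq\beta$. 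Thus $\beta{\mathscr A}+\alpha{\mathscr M}(q)\succeq 0$; the case $\beta=0$ would give ${\mathscr M}(q)\preceq 0$, contradicting that $q$ is positive somewhere, so $\beta>0$, and with $\mu:=-\alpha/\beta\geq 0$ we get ${\mathscr A}-\mu{\mathscr M}(q)\succeq 0$, as required.

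The step I expect to carry the real weight is Dines's convexity of $W$ -- the one genuinely nontrivial input -- together with the care needed in the homogenization: dropping the constraint $z_{n+1}\geq 0$ and pinning down ${\mathscr H}(D)$ both rely on $q$ attaining a positive value, and this hypothesis is indispensable (for instance, if $q(x)=-\|x\|^2$ then $D$ is a single point and no multiplier exists). An alternative route, more in the style of the rest of the paper, would be to dualize via Lemma~\ref{lem:sturm}: the claimed identity then reduces to showing that every $X\succeq 0$ with $X\bullet{\mathscr M}(q)\geq 0$ can be written as $\sum_i w_iw_i^{\T}$ with $w_i^{\T}{\mathscr M}(q)w_i\geq 0$ for each $i$ -- a rank-one ``purification'' of Sturm--Zhang type -- and there the obstacle becomes this decomposition lemma.
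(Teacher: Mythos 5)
Your argument is correct: the easy inclusion, the homogenization of $D$ via the Slater-type hypothesis that $q$ is positive somewhere, the reduction to a sign-free condition on the full cone $\{z:\,z^{\T}{\mathscr M}(q)z\geq 0\}$, Dines's convexity of the joint range, and the separation step forcing $\beta>0$ are exactly the classical proof of the s-Lemma. The paper itself gives no proof of this lemma and simply defers to \cite{yakubovich}, \cite{polik} and \cite{hauserS-Lemma}; your write-up is essentially the argument found in those references (in particular the Polik--Terlaky survey), with the one genuinely nontrivial input (Dines's theorem) correctly identified and isolated.
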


For proofs see e.g.\ \cite{yakubovich}, \cite{polik} and \cite{hauserS-Lemma}.

\subsection{Improved Approximation for Use in Section 
\ref{sec:ellipsoidal}}\label{sec:inner approx quadratic}

Next we will generalize 
the improved inner approximation \eqref{better approx} to copositivity 
cones associated with the convex set $D$ we found in \eqref{set of values}, 
\eqref{semi-infinite}, 
\begin{equation*}
D:=\left\{\left[\begin{smallmatrix}u\\y\end{smallmatrix}\right]\in\R^{k+n}:\,
u^{\T}u\leq 1,\,Fy=f,\,Gy\leq g\right\}.
\end{equation*}
Recall that $F\in\R^{m_f\times n}$ has full row rank, and 
$G\in\R^{m_g\times n}$. Let $H$ be the trailing $n\times(n-m_f)$ block of 
the Q-factor of the QR-decomposition $[\begin{smallmatrix}\star&H
\end{smallmatrix}]R$ of $F^{\T}$, so that the columns of $H$ form a 
basis of $\ker(F)$. Further, let $x_p\in\R^n$ be a particular solution of the 
system $Fy=f$, and let us write $r:=n-m_f$, so that 
\begin{equation}\label{eq:Lambda}
\{y\in\R^n:\,Fy=f\}=\{x_p+H w:\,w\in\R^{r}\}. 
\end{equation}
Consider the linear map 
\begin{align*}
\Lambda:\Sym{k+n+1}&\rightarrow{\mathscr S}^{k+r+1},\\
\begin{bmatrix}A_{11}&A_{12}&b_1\\A_{12}^{\T}&A_{22}&b_2\\
b_{1}^{\T}&b_{2}^{\T}&c\end{bmatrix}&\mapsto\begin{bmatrix}
A_{11}&A_{12}H&b_1+A_{12}x_p\\
H^{\T}A_{12}^{\T}&H^{\T}A_{22}H&H^{\T}(b_2+A_{22}x_p)\\
b_{1}^{\T}+x_p^{\T}A_{12}^{\T}&(b_2+A_{22}x_p)^{\T}H&
x_p^{\T}A_{22}x_p+2b_2^{\T}x_p+c
\end{bmatrix}.
\end{align*}
For any quadratic function $h$ on $\R^{k+n}$ with matrix 
representation ${\mathscr A}\in{\mathscr S}^{k+n+1}$ let 
$h_{\Lambda}$ be the corresponding quadratic function on 
$\R^{k+r}$ with matrix representation $\Lambda({\mathscr A})$. 
If $y=x_p+H w$, then by construction of $\Lambda$ we have 
\begin{equation}\label{if and only if}
h(u,y)\geq 0\Leftrightarrow h_{\Lambda}(u,w)\geq 0. 
\end{equation}
Let $q(u,y)=1-u^{\T}u$, so that 
\begin{equation*}
D=\left\{\left[\begin{smallmatrix}u\\y\end{smallmatrix}\right]:\,
q(u,y)\geq 0,\, \left[\begin{smallmatrix}0&F\end{smallmatrix}\right]
\left[\begin{smallmatrix}u\\y\end{smallmatrix}\right]=f,\, 
\left[\begin{smallmatrix}0&G\end{smallmatrix}\right]
\left[\begin{smallmatrix}u\\y\end{smallmatrix}\right]\leq g\right\},
\end{equation*}
and note that 
\begin{equation*}
{\mathscr M}(q)=\left[\begin{smallmatrix}-\I&&\\&0_n&\\&&1\end{smallmatrix}
\right],\quad
{\mathscr M}(q_{\Lambda})=\left[\begin{smallmatrix}-\I&&\\&0_r&\\&&1
\end{smallmatrix}\right],
\end{equation*}
where $\I$ is an identity matrix of size $k$ and $0_n$, $0_r$ are zero 
matrices of size $n$ and $r$ respectively. Let 
\begin{equation*}
D_{\Lambda}:=\left\{\left[\begin{smallmatrix}u\\w\end{smallmatrix}\right]
\in\R^{k+r}:\,q_{\Lambda}(u,w)\geq 0,\, 
\left[\begin{smallmatrix}0&GH\end{smallmatrix}\right]
\left[\begin{smallmatrix}u\\w\end{smallmatrix}\right]\leq g-Gx_p\right\}.
\end{equation*}
We now obtain the following result, which shows that we can work directly 
in the reduced space $\R^{k+r}$ instead of $\R^{k+n}$: 

\begin{proposition}\label{prop:subspace}
\begin{itemize}
\item[i)\;] $D=\{[\begin{smallmatrix}u&(x_p+Hw)^{\T}\end{smallmatrix}
]^{\T}:\,[\begin{smallmatrix}u&w\end{smallmatrix}]^{\T}\in D_{\Lambda}\}$, 
\item[ii)\;] $\mathscr{FC}_+(D)=\Lambda^{-1}(\mathscr{FC}_+(D_{\Lambda}))$.
\end{itemize}
\end{proposition}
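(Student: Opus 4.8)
The plan is to prove the two statements in sequence, deriving ii) from i) together with the key equivalence \eqref{if and only if} and a duality argument.

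First I would prove i). This is essentially a restatement of the affine parametrization \eqref{eq:Lambda} of the solution set of $Fy=f$. Given $[\begin{smallmatrix}u&y^{\T}\end{smallmatrix}]^{\T}\in D$, the constraint $Fy=f$ means $y=x_p+Hw$ for a unique $w\in\R^r$ (uniqueness because the columns of $H$ are linearly independent), and since $q(u,y)=1-u^{\T}u=q_{\Lambda}(u,w)$ and $Gy=G x_p+GHw\leq g$ iff $GHw\leq g-Gx_p$, we get $[\begin{smallmatrix}u&w\end{smallmatrix}]^{\T}\in D_{\Lambda}$. Conversely, any $[\begin{smallmatrix}u&w\end{smallmatrix}]^{\T}\in D_{\Lambda}$ yields a point of $D$ by the same bijective correspondence. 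So i) is a short verification.

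For ii), I would unwind the definition of the copositivity cone. An element ${\mathscr A}\in\Sym{k+n+1}$ lies in $\mathscr{FC}_+(D)$ iff the quadratic function $h$ with matrix representation ${\mathscr A}$ satisfies $h(u,y)\geq 0$ for all $[\begin{smallmatrix}u&y^{\T}\end{smallmatrix}]^{\T}\in D$. Using i) to reparametrize the points of $D$ by points of $D_{\Lambda}$, and then invoking \eqref{if and only if} which says $h(u,x_p+Hw)\geq 0\Leftrightarrow h_{\Lambda}(u,w)\geq 0$, this is equivalent to $h_{\Lambda}(u,w)\geq 0$ for all $[\begin{smallmatrix}u&w\end{smallmatrix}]^{\T}\in D_{\Lambda}$, i.e.\ to $\Lambda({\mathscr A})\in\mathscr{FC}_+(D_{\Lambda})$. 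Hence ${\mathscr A}\in\mathscr{FC}_+(D)\Leftrightarrow\Lambda({\mathscr A})\in\mathscr{FC}_+(D_{\Lambda})\Leftrightarrow{\mathscr A}\in\Lambda^{-1}(\mathscr{FC}_+(D_{\Lambda}))$, which is the claimed set equality. Note that $\Lambda^{-1}$ here denotes the preimage of a set under the linear map $\Lambda$, not an inverse map, so no invertibility of $\Lambda$ is needed.

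The one point that requires a little care — and is the main thing to get right — is checking that \eqref{if and only if} genuinely holds with the $\Lambda$ defined in the display above, i.e.\ that substituting $y=x_p+Hw$ into the quadratic form $[\begin{smallmatrix}u&y^{\T}&1\end{smallmatrix}]\,{\mathscr A}\,[\begin{smallmatrix}u&y^{\T}&1\end{smallmatrix}]^{\T}$ produces exactly $[\begin{smallmatrix}u&w^{\T}&1\end{smallmatrix}]\,\Lambda({\mathscr A})\,[\begin{smallmatrix}u&w^{\T}&1\end{smallmatrix}]^{\T}$. This is the routine block-matrix computation: write ${\mathscr A}=[\begin{smallmatrix}A_{11}&A_{12}&b_1\\A_{12}^{\T}&A_{22}&b_2\\b_1^{\T}&b_2^{\T}&c\end{smallmatrix}]$, expand $h(u,x_p+Hw)$, and collect the coefficients of the monomials in $(u,w)$; the resulting matrix is precisely the block matrix in the definition of $\Lambda$. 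Since \eqref{if and only if} is already asserted in the text, the proof can simply cite it, and the argument above is then immediate. I expect no real obstacle; the proposition is a bookkeeping lemma whose content is entirely the observation that restricting a quadratic form to an affine subspace corresponds, at the level of matrix representations, to the congruence-type transformation $\Lambda$.
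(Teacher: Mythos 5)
Your argument is correct and is essentially the paper's own proof, which simply cites \eqref{eq:Lambda} and \eqref{if and only if} for part i) and then part i) together with \eqref{if and only if} for part ii); you have merely spelled out the bookkeeping (the bijection $y=x_p+Hw$, the preimage reading of $\Lambda^{-1}$, and the block-matrix verification behind \eqref{if and only if}) that the paper leaves implicit. No gaps.
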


\begin{proof}
i) follows from \eqref{eq:Lambda} and \eqref{if and only if}, while ii) follows 
from part i) and \eqref{if and only if}.
\end{proof}

Next, let $p_0^{\T}=[\begin{smallmatrix}0&1\end{smallmatrix}]$, where 
$0$ is a zero row vector of size $k+r$, and let $p_i^{\T}$ $(i=1,\dots,m_g)$ 
be the row vectors of the matrix $[\begin{smallmatrix}0&-GH&g-Gx_p
\end{smallmatrix}]$, where $0$ is now a zero matrix of size $m_g\times k$. 
Then it follows from Lemma 4 in Sturm-Zhang \cite{sturm} that 
\begin{equation}\label{st shows}
{\mathscr H}(D_{\Lambda})=\left\{z\in\R^{k+r+1}:\,
z^{\T}\left[\begin{smallmatrix}-\I&&\\&0_r&\\&&1
\end{smallmatrix}\right]z\geq 0,\, p_i^{\T}z\geq 0,\, (i=0,\dots,m_g)\right\}.
\end{equation}

\begin{theorem}\label{improved approximation theorem}
An inner approximation of the cone $\mathscr{FC}_+(D_{\Lambda})$ is 
given by 
\begin{multline}\label{eq:general right hand side}
{\mathscr FC}_+(D_{\Lambda})\supseteq
\Sym{k+r+1}_+ 
+\cone\left\{\left[\begin{smallmatrix}-\I&&\\&0_r&\\&&1
\end{smallmatrix}\right]\right\}\\
+\cone\left\{p_ip_j^{\T}+p_jp_i^{\T}:\,i\neq j\in\{0,\dots,m_g\}\right\}\\
+\sum_{i=0}^{m_g}\left\{p_i
\left[\begin{smallmatrix}u\\0\\ \tau\end{smallmatrix}\right]^{\T}
+\left[\begin{smallmatrix}u\\0\\ \tau\end{smallmatrix}\right]p_i^{\T}:\,
\left[\begin{smallmatrix}\tau\\u\end{smallmatrix}\right]\in 
L_{k+1}\right\}.
\end{multline}
Furthermore, if $m_g\in\{0,1\}$ then the inclusion 
in \eqref{eq:general right hand side} is an equality. 
\end{theorem}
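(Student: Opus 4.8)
The plan is to prove the inclusion in \eqref{eq:general right hand side} for every $m_g$ by verifying that each of the four families of generators on its right-hand side defines a quadratic form that is nonnegative on ${\mathscr H}(D_{\Lambda})$, and then to sharpen it to an equality when $m_g\in\{0,1\}$ by combining the s-Lemma with a rank-one decomposition argument. Throughout I write $C_0:=\{z\in\R^{k+r+1}:\,z^{\T}{\mathscr M}(q_{\Lambda})z\ge 0,\ p_0^{\T}z\ge 0\}$; by \eqref{st shows} this is a Lorentz cone, extended cylindrically in the $w$-coordinates and equal to ${\mathscr H}\bigl(\{q_{\Lambda}\ge 0\}\bigr)$, and one has ${\mathscr H}(D_{\Lambda})=C_0\cap\bigcap_{i=1}^{m_g}\{z:\,p_i^{\T}z\ge 0\}$. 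A short computation gives $C_0^*=\bigl\{\left[\begin{smallmatrix}u\\0\\ \tau\end{smallmatrix}\right]:\,\left[\begin{smallmatrix}\tau\\u\end{smallmatrix}\right]\in L_{k+1}\bigr\}$, which is exactly the set of vectors $c$ that occur in the last sum of \eqref{eq:general right hand side}.

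To prove the inclusion it suffices, since $\mathscr{FC}_+(D_{\Lambda})$ is a convex cone, to place each generator in it. We have $\Sym{k+r+1}_+\subseteq\mathscr{FC}_+(D_{\Lambda})$ trivially; the matrix ${\mathscr M}(q_{\Lambda})$ represents $z\mapsto z^{\T}{\mathscr M}(q_{\Lambda})z$, nonnegative on ${\mathscr H}(D_{\Lambda})$ by \eqref{st shows}; for $i\ne j$ the matrix $p_ip_j^{\T}+p_jp_i^{\T}$ represents $z\mapsto 2(p_i^{\T}z)(p_j^{\T}z)$, a product of two functions each nonnegative on ${\mathscr H}(D_{\Lambda})$; and for $c\in C_0^*$ and any $i\in\{0,\dots,m_g\}$ the matrix $p_ic^{\T}+cp_i^{\T}$ represents $z\mapsto 2(p_i^{\T}z)(c^{\T}z)$, whose two factors are nonnegative on ${\mathscr H}(D_{\Lambda})\subseteq C_0$. (Alternatively, the inclusion follows from Theorem \ref{prop:copositive} after observing ${\mathscr H}(D_{\Lambda})^*=\closure\bigl(C_0^*+\cone\{p_1,\dots,p_{m_g}\}\bigr)$, expanding the products $vw^{\T}+wv^{\T}$ into their pure cross terms, and absorbing the $C_0^*\times C_0^*$ terms into $\Sym{k+r+1}_++\cone\{{\mathscr M}(q_{\Lambda})\}$ by \eqref{for later use}, in exact analogy with Example \ref{example3}.)

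For tightness when $m_g=0$, the constraints $Gy\le g$ are absent, $D_{\Lambda}=\{q_{\Lambda}\ge 0\}$, and $q_{\Lambda}$ takes the value $1>0$ at $u=0$, so Lemma \ref{s-Lemma} gives $\mathscr{FC}_+(D_{\Lambda})=\Sym{k+r+1}_++\cone\{{\mathscr M}(q_{\Lambda})\}$; the only surviving generator of \eqref{eq:general right hand side}, the family $\{p_0c^{\T}+cp_0^{\T}:\,c\in C_0^*\}$, is absorbed into this by \eqref{for later use} since $p_0\in C_0^*$, and the two sides coincide. The case $m_g=1$ is the crux: here ${\mathscr H}(D_{\Lambda})=C_0\cap\{z:\,p_1^{\T}z\ge 0\}$, and, since $p_0\in C_0^*$ absorbs both the $p_0$-family and the term $p_0p_1^{\T}+p_1p_0^{\T}$ into the $p_1$-family, the claim reduces to
\begin{equation*}
\mathscr{FC}_+(D_{\Lambda})=\Sym{k+r+1}_++\cone\{{\mathscr M}(q_{\Lambda})\}+\{p_1c^{\T}+cp_1^{\T}:\,c\in C_0^*\}.
\end{equation*}
Dualising this identity (using Lemma \ref{lem:sturm} on the left, noting that the $p_1$-family dualises to the condition $Xp_1\in C_0^{**}=C_0$, and that under the standing regularity assumptions on $\cX$ the right-hand sum of cones is closed), the claim becomes
\begin{equation*}
\{X\succeq 0:\,{\mathscr M}(q_{\Lambda})\bullet X\ge 0,\ Xp_1\in C_0\}\ \subseteq\ \closure\bigl(\conv\{zz^{\T}:\,z\in C_0,\ p_1^{\T}z\ge 0\}\bigr),
\end{equation*}
the reverse inclusion being the already-proved part of \eqref{eq:general right hand side}.

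This last inclusion I would establish by a rank-one decomposition argument in the spirit of Sturm and Zhang \cite{sturm}: starting from an arbitrary decomposition $X=\sum_i z_iz_i^{\T}$, a first round of plane rotations on pairs $\{z_a,z_b\}$ for which $z_a^{\T}{\mathscr M}(q_{\Lambda})z_a$ and $z_b^{\T}{\mathscr M}(q_{\Lambda})z_b$ have opposite signs places every $z_i$ in $C_0\cup(-C_0)$; then, exploiting $Xp_1=\sum_i(p_1^{\T}z_i)z_i\in C_0$, a second round of rotations and sign changes arranges $z_i\in C_0$ with $p_1^{\T}z_i\ge 0$ for every $i$; equivalently, this exactness is a special case of the results of Sturm and Zhang \cite{sturm} for the intersection of an ellipsoidal region with a single half-space. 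The principal difficulty is precisely this last step: the two rounds of rotations cannot be run independently, because flipping the sign of a $z_i$ to move it from $-C_0$ into $C_0$ also reverses the sign of $p_1^{\T}z_i$, so the ${\mathscr M}(q_{\Lambda})$-balancing and the $p_1$-balancing must be interleaved while keeping $\sum_i z_iz_i^{\T}=X$ intact — and it is here that a Slater-type regularity condition on $\cX$ (which furnishes a common interior point of $C_0$ and $\{z:\,p_1^{\T}z>0\}$) enters. For $m_g\ge 2$ this mechanism fails: products $(p_i^{\T}z)(p_j^{\T}z)$ of the additional linear forms no longer generate enough of $\mathscr{FC}_+(D_{\Lambda})$ — the same obstruction responsible for the co-NP-hardness of copositivity testing — so the inclusion \eqref{eq:general right hand side} is in general strict, which is why equality is asserted only for $m_g\in\{0,1\}$.
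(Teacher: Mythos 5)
Your proposal is correct, and for the inclusion it takes a genuinely different route from the paper. The paper argues dually: it outer-approximates $\mathscr{FC}_+(D_{\Lambda})^*=\conv\{zz^{\T}:\,z\in{\mathscr H}(D_{\Lambda})\}$ by the intersection $\Sym{k+r+1}_+\cap K^q\cap K^l_0\cap\dots\cap K^l_{m_g}\cap K_{\mathscr H}$, computes $K_{\mathscr H}^*$ via Lemma \ref{lem:k*}, dualizes with Lemma \ref{kw**}, and absorbs the cross terms coming from pairs in $C_0^*$ using \eqref{for later use} --- which is exactly the alternative you mention parenthetically. Your primal verification --- each generator is either positive semidefinite, or is ${\mathscr M}(q_{\Lambda})$, or represents a product $2(p_i^{\T}z)(p_j^{\T}z)$ or $2(p_i^{\T}z)(c^{\T}z)$ of linear forms nonnegative on ${\mathscr H}(D_{\Lambda})$, using the correct identification of $C_0^*$ as the padded Lorentz cone --- is shorter and more self-contained; what the paper's dual derivation buys is an explanation of where the generator list comes from (it is the dual of a natural outer approximation of $\mathscr{FC}_+(D_{\Lambda})^*$ built from \eqref{st shows}), which is what makes the subsequent tightness discussion natural. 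On tightness the two arguments essentially coincide: for $m_g=0$ both reduce to the s-Lemma (Lemma \ref{s-Lemma}, which the paper invokes through \eqref{for later use too}), and for $m_g=1$ both ultimately rest on Theorem 3 of Sturm and Zhang \cite{sturm}; your rank-one-decomposition sketch is a fair description of how that theorem is proved, and since you, like the paper, end up citing \cite{sturm} for it, its admitted incompleteness is not a gap. Two minor caveats: the dualization step in your $m_g=1$ reduction needs the closedness of the sum of cones on the right-hand side, a point both you and the paper pass over quickly; and your closing assertion that the inclusion is ``in general strict'' for $m_g\geq 2$ is plausible but is neither proved by you nor claimed by the theorem.
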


\begin{proof}
Lemma \ref{lem:sturm} and \eqref{st shows} establish 
\begin{align}
{\mathscr FC}_+(D_{\Lambda})^*&=\conv\left\{zz^{\T}:\,z\in\R^{k+r+1},\,
z^{\T}\left[\begin{smallmatrix}-\I&&\\&0_r&\\&&1
\end{smallmatrix}\right]z\geq 0,\, p_i^{\T}z\geq 0,\, (i=0,\dots,m_g)\right\}
\nonumber\\
&\subseteq\Sym{k+r+1}_+ \cap K^q \cap K^l_0\cap\dots\cap K^l_{m_g}
\cap K_{\mathscr H},\label{cone intersect}
\end{align}
where 
\begin{align*}
K^{q}&:=\left\{X\in\Sym{k+r+1}:\,X\bullet \left[\begin{smallmatrix}
-\I&&\\&0_r&\\&&1\end{smallmatrix}\right]\geq 0\right\},\\
K^{l}_i&:=\left\{X\in\Sym{k+r+1}:\,X p_i\in{\mathscr H}(D_{\Lambda})
\right\},\quad(i=0,\dots,m_g),\\
 K_{\mathscr H}&:=\left\{X\in\Sym{k+r+1}:\,Xv\in{\mathscr H}(D_{\Lambda})
 \;\forall\,v\in{\mathscr H}(D_{\Lambda})^*\right\}.
\end{align*}
Using the self-duality of $L_{k+1}$ and \eqref{st shows}, we get 
\begin{equation*}
{\mathscr H}(D_{\Lambda})^*=\cone\left\{p_i:\,i=0,\dots,m_g\right\}
+\left\{\left[\begin{smallmatrix}u\\0\\ \tau\end{smallmatrix}\right]:\,
\left[\begin{smallmatrix}\tau\\ u\end{smallmatrix}\right]\in L_{k+1}
\right\},
\end{equation*}
so that Lemma \ref{lem:k*} implies 
\begin{multline}\label{obige}
K_{{\mathscr H}}^*=\cone\left\{p_i p_j^{\T}+p_j p_i^{\T}:\,i,j\in\{0,\dots,
m_g\}\right\}\\
+\left\{\left[\begin{smallmatrix}u\\0\\ 
\tau\end{smallmatrix}\right]\left[\begin{smallmatrix}v\\0\\ 
\sigma\end{smallmatrix}\right]^{\T}+\left[\begin{smallmatrix}v\\0\\ 
\sigma\end{smallmatrix}\right]\left[\begin{smallmatrix}u\\0\\ 
\tau\end{smallmatrix}\right]^{\T}:\,\left[\begin{smallmatrix}\tau\\u
\end{smallmatrix}\right], \left[\begin{smallmatrix}\sigma\\v
\end{smallmatrix}\right]\in L_{k+1}\right\}\\
+\sum_{i=0}^{m_g}\left\{p_i\left[\begin{smallmatrix}u\\0\\ 
\tau\end{smallmatrix}\right]^{\T}+\left[\begin{smallmatrix}u\\0\\ 
\tau\end{smallmatrix}\right]p_i^{\T}:\, \left[\begin{smallmatrix}\tau\\
u\end{smallmatrix}\right]\in L_{k+1}\right\}.
\end{multline}
Next, using Lemma \ref{kw**} to take duals in \eqref{cone intersect}, 
we find 
\begin{multline}\label{untere}
{\mathscr FC}_+(D_{\Lambda})\supseteq\Sym{k+r+1}_+ +
\cone\left\{\left[\begin{smallmatrix}-\I&&\\&0_r&\\&&1\end{smallmatrix}
\right]\right\}\\
\quad+\sum_{i=0}^{m_g}\left\{p_i z^{\T}+zp_i^{\T}:\,
z\in{\mathscr H}(D_{\Lambda})^*\right\}+K_{{\mathscr H}}^*.
\end{multline}
Substituting \eqref{obige} into \eqref{untere}, exploiting the 
fact that 
\begin{equation*}
\left\{\left[\begin{smallmatrix}u\\0\\ 
\tau\end{smallmatrix}\right]\left[\begin{smallmatrix}v\\0\\ 
\sigma\end{smallmatrix}\right]^{\T}+\left[\begin{smallmatrix}v\\0\\ 
\sigma\end{smallmatrix}\right]\left[\begin{smallmatrix}u\\0\\ 
\tau\end{smallmatrix}\right]^{\T}:\,\left[\begin{smallmatrix}\tau\\u
\end{smallmatrix}\right], \left[\begin{smallmatrix}\sigma\\v
\end{smallmatrix}\right]\in L_{k+1}\right\}\subset
\Sym{k+r+1}_+ +\cone\left\{\left[\begin{smallmatrix}-\I&&\\&0_r&\\&&1
\end{smallmatrix}\right]\right\},
\end{equation*}
which follows from \eqref{for later use}, and using 
\begin{equation*}
\cone\left\{p_i p_i^{\T}+p_i p_i^{\T}:\,i=1,\dots,m_g\right\}
\subset\Sym{k+r+1}_+,
\end{equation*}
the inclusion claimed in the theorem is seen to hold true. Furthermore, 
it follows from \eqref{for later use too} that the inclusion is an equality 
when $m_g=0$. The fact that this is also true for $m_g=1$ follows 
from Sturm-Zhang \cite{sturm}, Theorem 3. 
\end{proof}

\begin{corollary}\label{approximation corollary}
Let $\Cov =U^{\T}U$ be the Cholesky factorization of $\Cov $. 
Then the following are sufficient conditions for \eqref{curly M} to 
hold: $\exists,\eta,\xi_{ij}\geq 0$, $(i\neq j=0,\dots,m_g)$, $s\in\R$ 
and $[\begin{smallmatrix}\tau_i\\u_i\end{smallmatrix}]\in 
L_{k+1}$, $(i=0,\dots,m_g)$ such that 
\begin{equation}\label{erschti}
\begin{bmatrix}\frac{1}{\sqrt{2}}&\frac{1}{\sqrt{2}}&\\
\frac{1}{\sqrt{2}}&-\frac{1}{\sqrt{2}}\\&&U\end{bmatrix}
\begin{bmatrix}\Half\\s\\x\end{bmatrix}\in L_{n+2}
\end{equation}
and 
\begin{multline}\label{lmi}
\begin{bmatrix}0&-\Half M^{\T}H&\Half M^{\T}(x-x_p)\\
-\Half H^{\T}M&\lambda H^{\T}\Cov H&H^{\T}(\Cov x_p
-\Half\overline{\mu})\\
\Half(x-x_p)^{\T}M&(\Cov x_p-\overline{\mu})^{\T}H&
\gamma-\lambda s+\overline{\mu}^{\T}(x-x_p)+\lambda x_p^{\T}
\Cov x_p\end{bmatrix}\\
-\eta\left[\begin{smallmatrix}-\I&&\\&0_r&\\&&1\end{smallmatrix}\right]
-\sum_{i\neq j=0}^{m_g}\xi_{ij}\left(p_i p_j^{\T}+p_j p_i^{\T}\right)\\
+\sum_{i=0}^{m_g}\left(p_i\left[\begin{smallmatrix}u_i\\0\\ \tau_i
\end{smallmatrix}\right]^{\T}+\left[\begin{smallmatrix}u_i\\0\\ \tau_i
\end{smallmatrix}\right]p_i^{\T}\right)\succeq 0.
\end{multline}
Furthermore, for $m_g\in\{0,1\}$ the above conditions are both 
necessary and sufficient for \eqref{curly M} to hold.
\end{corollary}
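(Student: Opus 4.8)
The plan is to unwind the definitions so that the abstract cone inclusion of Theorem \ref{improved approximation theorem} becomes the two explicit semidefinite/second-order conditions \eqref{erschti}--\eqref{lmi}. First I would recall from Section \ref{sec:reformulation} that \eqref{curly M} asks for ${\mathscr M}_{x,\gamma}\in\conv\{zz^{\T}:z\in{\mathscr H}(D)\}^*=\mathscr{FC}_+(D)$, where $D$ is the set in \eqref{set of values}. By Proposition \ref{prop:subspace}ii), $\mathscr{FC}_+(D)=\Lambda^{-1}(\mathscr{FC}_+(D_{\Lambda}))$, so \eqref{curly M} is equivalent to $\Lambda({\mathscr M}_{x,\gamma})\in\mathscr{FC}_+(D_{\Lambda})$. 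A routine block computation of $\Lambda({\mathscr M}_{x,\gamma})$ using the definition of $\Lambda$ and the explicit form of ${\mathscr M}_{x,\gamma}$, together with the substitution $y=x_p+Hw$ implicit in the $(1,1)$-block bookkeeping, produces exactly the $3\times 3$ block matrix appearing in \eqref{lmi} (before the correction terms), once one writes $x=x_p+Hw$; here the $\lambda x_p^{\T}\Cov x_p$ and $\overline{\mu}^{\T}(x-x_p)$ terms in the bottom-right entry are precisely what $\Lambda$ contributes from the $A_{22}$ and $b_2$ pieces.

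Next I would invoke Theorem \ref{improved approximation theorem}, which gives the inner approximation
\[
\mathscr{FC}_+(D_{\Lambda})\supseteq\Sym{k+r+1}_+ +\cone\{{\mathscr M}(q_{\Lambda})\}+\cone\{p_ip_j^{\T}+p_jp_i^{\T}\}+\sum_{i=0}^{m_g}\{p_iz^{\T}+zp_i^{\T}:z\in\left[\begin{smallmatrix}u\\0\\\tau\end{smallmatrix}\right], [\begin{smallmatrix}\tau\\u\end{smallmatrix}]\in L_{k+1}\},
\]
with equality when $m_g\in\{0,1\}$. Thus a \emph{sufficient} condition for $\Lambda({\mathscr M}_{x,\gamma})\in\mathscr{FC}_+(D_{\Lambda})$ is that $\Lambda({\mathscr M}_{x,\gamma})$ minus a nonnegative combination $\eta{\mathscr M}(q_{\Lambda})=\eta[\begin{smallmatrix}-\I&&\\&0_r&\\&&1\end{smallmatrix}]$, minus $\sum_{i\neq j}\xi_{ij}(p_ip_j^{\T}+p_jp_i^{\T})$, plus the Lorentz-cone terms $\sum_i(p_i[\begin{smallmatrix}u_i\\0\\\tau_i\end{smallmatrix}]^{\T}+[\begin{smallmatrix}u_i\\0\\\tau_i\end{smallmatrix}]p_i^{\T})$, be positive semidefinite. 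Reading off the sign conventions (subtracting the $\eta$ and $\xi$ terms, adding the $\tau,u$ terms) gives exactly \eqref{lmi}, and the requirement $[\begin{smallmatrix}\tau_i\\u_i\end{smallmatrix}]\in L_{k+1}$, $\eta,\xi_{ij}\ge 0$. For $m_g\in\{0,1\}$ the inclusion in Theorem \ref{improved approximation theorem} is an equality, hence these conditions are also necessary.

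The only remaining point is the auxiliary variable $s$ and condition \eqref{erschti}. The bottom-right entry of $\Lambda({\mathscr M}_{x,\gamma})$ contains the term $-\lambda x^{\T}\Cov x$, which is \emph{not} linear in $x$, so the expression in \eqref{lmi} is not yet a linear matrix inequality in the decision variables. The trick is to replace $x^{\T}\Cov x$ by a new scalar variable $s$ constrained by $s\ge x^{\T}\Cov x$: since the offending term enters with a negative sign in a diagonal block of a matrix required to be $\succeq 0$, enlarging it to $-\lambda s$ only shrinks the feasible set, so the relaxation $s\ge \|Ux\|^2$ is exact at optimality and still sufficient in general. Writing $s\ge\|Ux\|^2$ as the rotated second-order cone condition and rotating into the standard Lorentz cone $L_{n+2}$ via the $45^\circ$ block $\left[\begin{smallmatrix}\frac{1}{\sqrt2}&\frac{1}{\sqrt2}\\\frac{1}{\sqrt2}&-\frac{1}{\sqrt2}\end{smallmatrix}\right]$ acting on $(\tfrac12,s)$ and appending $Ux$ gives exactly \eqref{erschti}. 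I expect the main obstacle to be purely bookkeeping: carefully matching the block entries of $\Lambda({\mathscr M}_{x,\gamma})$ against those displayed in \eqref{lmi} (in particular tracking how the $x$-dependence splits into $x_p$ and $Hw$ parts, and getting the factors of $\tfrac12$ and the signs on the off-diagonal $M^{\T}$ blocks right), together with verifying that the rotated-cone encoding of $s\ge x^{\T}\Cov x$ lands precisely in $L_{n+2}$ as written.
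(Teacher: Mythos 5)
Your proposal follows the paper's own proof essentially step for step: reduce \eqref{curly M} to $\Lambda({\mathscr M}_{x,\gamma})\in\mathscr{FC}_+(D_{\Lambda})$ via Proposition \ref{prop:subspace}, apply the inner approximation (with equality for $m_g\in\{0,1\}$) of Theorem \ref{improved approximation theorem}, linearize the $-\lambda x^{\T}\Cov x$ term by introducing $s\geq x^{\T}\Cov x$ (exact since one may take $s=x^{\T}\Cov x$, and sufficient since increasing $s$ only subtracts a positive semidefinite rank-one term), and encode that inequality as the rotated second-order cone condition \eqref{erschti}. The only slip is a bookkeeping one you already flag yourself: the substitution $y=x_p+Hw$ is applied to the adversary's variable $y$ inside $\Lambda$, while $x$ remains a parameter in \eqref{lmi}; otherwise the argument is the paper's.
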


\begin{proof}
With ${\mathscr M}_{x,\gamma}$ defined as in Section \ref{sec:ellipsoidal}, 
condition \eqref{curly M} is of course the same as 
${\mathscr M}_{x,\gamma}\in\mathscr{FC}_+(D)$, see Lemma 
\ref{lem:sturm}. Proposition \ref{prop:subspace} shows that this is further 
equivalent to $\Lambda({\mathscr M}_{x,\gamma})\in
\mathscr{FC}_+(D_{\Lambda})$. By Theorem 
\ref{improved approximation theorem}, for this latter condition 
to hold it is sufficient to demand that 
\begin{multline}\label{prefinal}
\Lambda({\mathscr M}_{x,\gamma})\in
\Sym{k+r+1}_+ 
+\cone\left\{\left[\begin{smallmatrix}-\I&&\\&0_r&\\&&1
\end{smallmatrix}\right]\right\}\\
+\cone\left\{p_ip_j^{\T}+p_jp_i^{\T}:\,i\neq j\in\{0,\dots,m_g\}\right\}\\
+\sum_{i=0}^{m_g}\left\{p_i
\left[\begin{smallmatrix}u\\0\\ \tau\end{smallmatrix}\right]^{\T}
+\left[\begin{smallmatrix}u\\0\\ \tau\end{smallmatrix}\right]p_i^{\T}:\,
\left[\begin{smallmatrix}u\\ \tau\end{smallmatrix}\right]\in L_{k+1}\right\}.
\end{multline}
Further, since $\lambda\geq 0$, \eqref{prefinal} is equivalent to the 
existence of a $s\geq x^{\T}\Cov x$ such that 
\begin{multline*}
\Lambda({\mathscr M}_{x,\gamma})+\lambda\left[
\begin{smallmatrix}0&&\\&0&\\
&&x^{\T}\Cov x-s\end{smallmatrix}\right]\in
\Sym{k+r+1}_+ 
+\cone\left\{\left[\begin{smallmatrix}-\I&&\\&0_r&\\&&1
\end{smallmatrix}\right]\right\}\\
+\cone\left\{p_ip_j^{\T}+p_jp_i^{\T}:\,i\neq j\in\{0,\dots,m_g\}\right\}\\
+\sum_{i=0}^{m_g}\left\{p_i
\left[\begin{smallmatrix}u\\0\\ \tau\end{smallmatrix}\right]^{\T}
+\left[\begin{smallmatrix}u\\0\\ \tau\end{smallmatrix}\right]p_i^{\T}:\,
\left[\begin{smallmatrix}u\\ \tau\end{smallmatrix}\right]\in L_{k+1}\right\}
\end{multline*}
which is the same as \eqref{lmi} for some $\eta,\xi_{ij}\geq 0$, 
$(i\neq j=0,\dots,m_g)$, and $[\begin{smallmatrix}\tau_i\\u_i
\end{smallmatrix}]\in L_{1+k}$, $(i=0,\dots,m_g)$. Finally, the 
condition $s\geq x^{\T}\Cov x$ is easily seen to be equivalent to 
\eqref{erschti}. The last claim holds because by 
Theorem \ref{improved approximation theorem}, \eqref{prefinal} is 
equivalent to $\Lambda({\mathscr M}_{x,\gamma})\in
\mathscr{FC}_+(D_{\Lambda})$ when $m_g\in\{0,1\}$. 
\end{proof}

\section{Appendix A: Proof of Lemma \ref{lem:simple}}

\begin{proof}
Let $p_1,p_2 \in\cU$ and $\alpha \in [0,1]$. Let us first assume that 
$z^*(\alpha p_1+(1-\alpha)p_2)<+\infty$. For $\varepsilon>0$ there exists 
$x\in\cX$ such that 
\begin{align*}
z^*(\alpha p_1 +(1-\alpha)p_2)-\varepsilon&
\leq f(x,\alpha p_1 +(1-\alpha)p_2)\\
&\leq\alpha f(x,p_1)+ (1-\alpha)f(x,p_2)\\
&\leq\alpha z^*(p_1) + (1-\alpha)z^*(p_2).
\end{align*}
Since this is true for all $\varepsilon$, we find 
\begin{equation}\label{convexity}
z^*(\alpha p_1 +(1-\alpha)p_2)\leq\alpha z^*(p_1) + (1-\alpha)z^*(p_2).
\end{equation}
Now assume $z^*(\alpha p_1+(1-\alpha)p_2)=+\infty$. Then there exists 
a sequence $(x_n)_{\N}\subset\cX$ such that 
\begin{equation}\label{divergence}
f(x_n,\alpha p_1+(1-\alpha)p_2)\stackrel{n\rightarrow\infty}{\rightarrow}
+\infty.
\end{equation}
By the same argument as above, we have 
\begin{equation*}
f(x_n,\alpha p_1 +(1-\alpha)p_2)
\leq\alpha z^*(p_1) + (1-\alpha)z^*(p_2),
\end{equation*}
whence \eqref{divergence} establishes that at least one of 
$z^*(p_i)$ $(i=1,2)$ equals $+\infty$, so that \eqref{convexity} 
holds once again. 
\end{proof}

\section{Appendix B: Convexification of Problem (\ref{eq:Sharpe})} 

We use the notation introduced in Section \ref{subsubsection:Sharpe}. 
Using the fact that any $x\in\cX$ was assumed to satisfy the budget constraint 
$e^{\T}x$, Problem \eqref{eq:Sharpe} has an equivalent formulation 
\begin{align}
\max_{x\in\R^n}\,&f(x)=\frac{(\mu-r e)^{\T}x}{\sqrt{x^{\T}\Cov x}}
\label{first reformulation}\\
\text{s.t.}\quad&x\in\cX.\nonumber
\end{align}
The objective function $f(x)$ of this formulation is homogeneous of degree $0$ in $x$. 
Consider also the normalized problem, 
\begin{align}
\max_{y\in\R^{n}}\,&g(y)=(\mu-r e)^{\T}y\label{second reformulation}\\
\text{s.t.}\quad&y\in\R_+\cX,\nonumber\\
&y^{\T}\Cov y=1,\nonumber
\end{align}
where $\R_+\cX=\{\tau x:\,\tau\geq 0,\,x\in\cX\}$ is a tractable cone. For example, if $\cX$ 
is a polyhedron, as it is in most applications, $\R_+\cX$ is a polyhedral cone. 

Since $e^{\T}x=1$ for all $x\in\cX$ and $\Cov\succ 0$, we have $x^{\T}\Cov x>0$ for all $x\in\cX$. 
Therefore, any feasible solution $x$ of \eqref{first reformulation} provides 
$y=(x^{\T}\Cov x)^{-1/2}x$ as a feasible solution of \eqref{second reformulation}, and furthermore, 
$g(y)=f(x)$. Conversely, since any feasible $y$ of \eqref{second reformulation} satisfies $y\neq 0$, 
the vector $x=(e^{\T}y)^{-1}y\in\cX$ is feasible for \eqref{first reformulation} and satisfies 
$f(x)=g(y)$. This shows that \eqref{first reformulation} and \eqref{second reformulation} are 
equivalent: Instead of solving \eqref{first reformulation}, we may solve \eqref{second reformulation} 
and then construct an optimal solution $x^*=(e^{\T}y^*)^{-1}y^*$ of the first 
problem from an optimal solution $y^*$ of the second. 

Formulation \eqref{second reformulation} is furthermore equivalent to its relaxation  
\begin{align}
\max_{y\in\R^{n}}\,&g(y)=(\mu-r e)^{\T}y\label{third reformulation}\\
\text{s.t.}\quad&y\in\R_+\cX,\nonumber\\
&y^{\T}\Cov y\leq 1,\nonumber
\end{align}
since a feasible solution of \eqref{third reformulation} cannot be optimal unless 
$y^{\T}\Cov y=1$, assuming that there exist feasible $y$ for which $g(y)>0$ 
(if this is not the case, it is not rational to invest at all). Thus, the optimal solution $y^*$ of 
\eqref{second reformulation} may be found by solving the tractable convex problem 
\eqref{third reformulation}. 

\end{document}